\definecolor{ForestGreen}{rgb}{0.1333,0.5451,0.1333}
\crefname{equation}{}{}
\newcommand\remove[1]{}
\newtheorem{lemma}{Lemma}[section]
\newtheorem{theorem}{Theorem}
\newtheorem*{lemma*}{Lemma}
\newtheorem*{corollary*}{Corollary}
\newtheorem*{remark}{Remark}
\theoremstyle{definition}
\newtheorem*{theorem*}{Theorem}
\newtheorem{definition}[lemma]{Definition}
\newtheorem*{rem*}{Remark}
\newcommand{\eps}{\varepsilon}
\newcommand{\R}{\mathbb{R}}
\renewcommand{\O}{\widetilde{O}}
\crefname{algocf}{Algorithm}{Algorithms}
\crefname{claim}{Claim}{Claims}
\renewcommand{\bar}{\overline}
\renewcommand{\hat}{\widehat}
\newcommand{\wt}{\widetilde}
\renewcommand{\bar}{\overline}
\newcommand{\diag}{\mathsf{diag}}
\newcommand{\g}{\nabla}
\newcommand{\hphi}{\hat{\phi}}
\newcommand{\aL}{\approx L}
\newcommand{\cD}{\mathcal{D}}
\newcommand{\sssp}{\mathsf{SSSP}}
\newcommand{\beps}{\bar{\eps}}
\newcommand{\assign}{\leftarrow}
\begin{document}

\title{Incremental Shortest Paths in Almost Linear Time via a \\ Modified Interior Point Method}

\author{Yang P. Liu \\ Carnegie Mellon University \\ yangl7@andrew.cmu.edu}

\clearpage\maketitle

\begin{abstract}
We give an algorithm that takes a directed graph $G$ undergoing $m$ edge insertions with lengths in $[1, W]$, and maintains $(1+\epsilon)$-approximate shortest path distances from a fixed source $s$ to all other vertices. The algorithm is deterministic and runs in total time $m^{1+o(1)}\log W$, for any $\eps > \exp(-(\log m)^{0.99})$. This is achieved by designing a nonstandard interior point method to crudely detect when the distances from $s$ other vertices $v$ have decreased by a $(1+\epsilon)$ factor, and implementing it using the deterministic min-ratio cycle data structure of [Chen-Kyng-Liu-Meierhans-Probst, STOC 2024].
\end{abstract}

\tableofcontents

\newpage

\section{Introduction}
\label{sec:intro}

In this paper we study the shortest path problem in dynamic directed graphs with positive edge lengths.
Specifically, we focus on \emph{incremental} graphs, those which only undergo edge insertions.
Our goal is to design an algorithm which maintains $(1+\eps)$-approximate shortest paths from a source vertex $s$ to all other vertices -- this is known as the SSSP (single-source shortest path) problem.

Over the last several decades, there has been a vast literature on SSSP in incremental as well as \emph{decremental} graphs (those which undergo only edge deletions) \cite{ES81,DHZ00,RZ11,HKN14,HKN15,HKN16,Bernstein16,KL19,KL20,PVW20,PW20,CZ21,KMP22,GK25}. One motivation is that in both the directed and undirected settings, (decremental) shortest path data structures can often be combined with multiplicative weights updates to give algorithms for maximum flow, multicommodity flow, and several other problems \cite{GK98,M10,BPS20,BPS21,CK24a,CK24b}. In particular, approximate shortest paths suffice for these applications, and additionally, there are conditional lower bounds against exact shortest paths in both incremental or decremental graphs \cite{AW14,HKNS15,AHRVW19,JX22,SVXY25}. However, until recently, almost-linear time algorithms for most incremental or decremental problems in dynamic directed graphs were out of reach.

Building off the almost-linear time mincost flow algorithm of \cite{CKLPPS22}, the works \cite{BLS23,CKLMP24,BCK+24} gave a framework based on a dynamic interior point method (IPM) to solve dynamic minimum cost flow in incremental and decremental graphs. Thus, they achieved almost-linear-time algorithms for approximate $s$-$t$ mincost flow, which in turn give algorithms for single-pair shortest paths, maxflow, cycle detection, and more. However, the SSSP problem cannot be captured by a single demand, and thus these works did not have any direct implications for SSSP.

In this work, we give an almost linear time approximate SSSP data structure in incremental directed graphs by using several modifications of the dynamic IPM driving previous works.
\begin{theorem}
\label{thm:main}
Let $m \ge 1$ and $\eps \ge \exp(-(\log m)^{0.99})$. Let $G$ be an incremental graph undergoing $m$ edge insertions with integral edge lengths in the range $[1, W]$ with a source $s \in V$. There is a deterministic data structure that explicitly maintains distance estimates $\wt{d}: V \to \R_{\ge0}$
\[ d_G(s, v) \le \wt{d}(v) \le (1+\eps)d_G(s, v), \]
and the ability to report approximate shortest paths $\pi_{s,v}$ from $s$ to $v$ in time $O(|\pi_{s,v}|)$. The total running time of the data structure is $m^{1+o(1)} \log W$.
\end{theorem}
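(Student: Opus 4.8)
The plan is to reduce incremental SSSP to maintaining, through a modified and deliberately \emph{crude} interior point method, an approximate solution of an auxiliary min-cost circulation, and to detect each $(1+\eps)$-factor distance decrease by watching for when that circulation acquires a substantially negative cycle. Concretely, write $n := |V|$ and maintain an overestimate $\wt d(v) \ge d_G(s,v)$ for every vertex, initialized (when $v$ first becomes reachable via an inserted edge $(u,v)$) to roughly $\wt d(u) + \ell_{uv} \le nW$. Let $H$ be the current graph $G$ augmented, for each reachable $v \neq s$, with a unit-capacity ``return edge'' $a_v = (v,s)$ of length $-\wt d(v)/(1+\eps/2)$ (the other edges get capacity $n$, which never binds at the optimum). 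While $H$ has no negative-length cycle it admits a feasible potential $\phi$, and combining the $G$-edge constraints $\phi_v - \phi_s \le d_G(s,v)$ with the $a_v$-constraint $\phi_v - \phi_s \ge \wt d(v)/(1+\eps/2)$ gives $d_G(s,v) \le \wt d(v) \le (1+\eps/2)\, d_G(s,v)$ --- so the entire task becomes: preserve the invariant that $H$ has no negative cycle. A negative cycle, when it forms, must run through some $a_v$ and certifies $d_G(s,v) < \wt d(v)/(1+\eps/2)$; then I set $\wt d(v) \leftarrow \wt d(v)/(1+\eps/2)$, update $a_v$'s length, and continue. Since lengths are integral and $d_G(s,v) \in [1,nW]$, each $\wt d(v)$ is decreased $O(\eps^{-1}\log(nW))$ times; with at most $m+1$ ever-reachable vertices and $\eps^{-1} = m^{o(1)}$ (as $(\log m)^{0.99} = o(\log m)$), the total number of estimate-decrease events is $m^{1+o(1)}\log W$.

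To detect negative cycles without ever solving the circulation to optimality, I run the primal IPM of \cite{CKLPPS22}, in the dynamic form of \cite{BLS23,CKLMP24,BCK+24}, on $H$, but only to a fixed, coarse accuracy: I keep the flow $f$ approximately central for the logarithmic barrier at a single coarse value of the centrality parameter. It is loose enough that an edge insertion, or the $\Theta(\eps)$ relative change in a return edge's length produced by $\wt d(v)\to\wt d(v)/(1+\eps/2)$, is absorbed by only $m^{o(1)}$ IPM steps, yet tight enough that the induced potentials certify $\wt d(v) \le (1+\eps)d_G(s,v)$ and that the coarse detector flags $a_v$ exactly when $\wt d(v)$ has genuinely slipped past $(1+\Theta(\eps))\,d_G(s,v)$ (never spuriously when it has not); a constant rescaling of $\eps$ then yields the theorem's guarantee. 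Each IPM step is an $m^{o(1)}$-approximate min-ratio cycle computation in $H$ against the current gradient and lengths, which I implement with the deterministic min-ratio cycle data structure of \cite{CKLMP24}: it handles, in $m^{o(1)}$ amortized time per operation, insertion of a new $G$-edge (admitted with a huge length and then relaxed), a return-edge length change, application of the IPM's flow update, and each min-ratio cycle query. Determinism is inherited from the IPM and this data structure.

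For the running time, there are $m$ insertions and $m^{1+o(1)}\log W$ estimate-decrease events; each triggers $m^{o(1)}$ IPM steps, each step costs $m^{o(1)}$ amortized, and the $O(\log(nW))$ rescaled copies of the instance one maintains so all quantities stay polynomially bounded add only $m^{o(1)}$ overhead, for a total of $m^{1+o(1)}\log W$ (the explicit array $\wt d$ being updated in $O(1)$ per event). To report $\pi_{s,v}$, I maintain beside $\phi$ a parent forest: each reachable $v\neq s$ stores an incoming edge $(u,v)$ with $\phi_u + \ell_{uv} \le \phi_v$ up to the coarse slack --- one exists by the same feasible-potential argument --- refreshed whenever the relevant $\phi$-values change. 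Following parents from $v$ reaches $s$ along a path of length $\le (1+\eps)\,d_G(s,v)$, traced in $O(|\pi_{s,v}|)$ time.

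I expect the main obstacle to be the design and analysis of the modified, crude IPM itself. Standard IPM potential-function arguments count steps to reach a prescribed duality gap from a centered start; here one must instead (a) operate at a permanently coarse centrality, (b) stay robust to incremental structural changes and to a long stream of $\Theta(\eps)$-scale length perturbations of the return edges, and (c) certify accuracy of every coordinate of $\phi$ --- equivalently, every return edge --- rather than of a single dual objective, all while keeping the amortized step count per event $m^{o(1)}$ and confirming that the $m^{o(1)}$-approximate min-ratio cycles returned by \cite{CKLMP24} keep the method on track. Making the coarse tolerance, the geometric $(1+\eps)$ step in $\wt d$, and the data-structure approximation compose correctly is the crux.
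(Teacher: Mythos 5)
Your proposal shares the general flavor of the paper's solution (run a dynamic IPM on a graph augmented with one $s$--$v$ edge per vertex, and let duality certify that distances haven't dropped), but the heart of the construction is left open, and the gap is exactly the part the paper is built around.

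Concretely, the central unproved claim in your writeup is that a $\Theta(\eps)$ decrease to a single return edge's length, or an edge insertion, ``is absorbed by only $m^{o(1)}$ IPM steps.'' There is no potential-function argument backing this per-event bound, and the usual IPM analysis does not supply one: perturbing a length can increase the centering potential, and the number of re-centering steps it costs is governed by an amortized potential-decrease argument, not by a per-event bound. You acknowledge this yourself in the final paragraph (``the main obstacle... is the crux''), so the proposal is really a plan of attack for the interesting lemma rather than a proof. The paper resolves this precisely by two devices absent from your outline. First, it never updates the augmented edges at all: $\hat E$ has lengths frozen at the initial $d_{G^{(0)}}(s,v)$, so the IPM faces a single fixed thresholded potential
\[ \Phi(f) = 10m\log(\ell^\top f - F^*) - \sum_{e\in E}\log(f_e+\delta_e) + \sum_{e\in\hat E}V(f_e) \]
whose total decrease from start ($\le 20m\log m$) to termination ($\ge -O(m\log m)$) caps the number of steps outright. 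Detection of dropped distances is then read off from the flow on $\hat E$ crossing a threshold (Lemma~\ref{lemma:hechange}), not from chasing a moving target. Second, that threshold-crossing count is controlled by the new barrier $V(x)=x^{-p}/p$ rather than $-\log x$; this nonstandard barrier is what shaves the factor of $\alpha$ in the recourse (Lemma~\ref{lemma:hechange}), which in turn is what makes the recursive layering close (Section~\ref{sec:recurse}). Without it the recursion blows up (Section~\ref{sec:idea1}'s ``recourse calculation''), and without the recursion one is back to needing a per-event re-centering bound of the kind you have not established.

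A secondary issue: your scheme maintains a moving array of return-edge lengths $-\wt d(v)/(1+\eps/2)$ of wildly varying magnitudes and signs, so the flow-cost term $\ell^\top f$ is not monotone or even bounded away from the threshold during a phase; the thresholded IPM of~\cite{CKLMP24} that you and the paper both lean on is stated for a fixed demand and fixed costs undergoing insertions only, not for a stream of adversarial length decreases on a fixed edge set. To make that machinery apply, the paper reformulates the problem so that only the original graph changes incrementally and the auxiliary edges are static, and then it restarts and recurses whenever the aggregate distance drops by a $(1-\alpha/2)$ factor (Theorem~\ref{thm:detect}, item~\ref{item:totaldist}). Your proposal would need either to prove the missing per-event stability lemma for the mutable-cost IPM directly, or to reorganize around a frozen auxiliary graph as the paper does. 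As written it does not constitute a proof.
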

The previous best runtime was $\O(m^{3/2} \log W)$ for deterministic algorithms and $\O(m^{4/3} \log W)$ for randomized algorithms (against an adaptive adversary) \cite{KMP22}. The latter bound is the best known even among algorithms against oblivious adversaries. \cref{thm:main} also gives a deterministic APSP data structure in incremental graphs with running time $m^{1+o(1)}n$, which improves over the recent work of \cite{GK25} (the work of \cite{Bernstein16} gave randomized algorithms for incremental and decremental APSP against oblivious adversaries with near-optimal runtimes). We refer the reader to \cite{CKLMP24,BCK+24} for additional references relating to other problems in partially dynamic graphs, as well as conditional hardness for these problems.

It is worth mentioning the recent work \cite{MMNNS25} which studied the incremental SSSP problem \emph{with predictions} -- also see the works \cite{HSSY24,BFNP25} which studied other dynamic graph problems with predictions. The work \cite{MMNNS25} also studied the offline incremental SSSP problem (without predictions) and gave a near-optimal $\O(m/\eps)$ time algorithm. This was also achieved in the later work \cite{GK25}. 

\paragraph{Preliminaries.} We sometimes say that a graph $G$ is \emph{incremental} to mean that it represents a sequence of graphs $G^{(0)}, G^{(1)}, \dots$, where each graph has an extra edge compared to the previous one. We let $B \in \R^{E \times V}$ denote the edge-vertex incidence matrix of a graph $G$. We say that a flow $f \in \R^E$ routes a demand $d \in \R^V$ if $B^\top f = d$. A flow routing demand $0$ is called a circulation.
For $a, b > 0$ and $\alpha \ge 1$ we write that $a \approx_{\alpha} b$ if $\alpha^{-1}b \le a \le \alpha b$.

\section{Overview}
\label{sec:overview}

In this section we provide an overview of our algorithm. We start by discussing how to reduce SSSP to detecting when vertices' shortest path lengths have decreased by a $(1-\eps)$ factor, and state our main detection algorithm (\cref{thm:detect}). Towards proving \cref{thm:detect}, we review the key points of the incremental mincost flow interior point method (IPM) from \cite{BLS23,CKLMP24}. Finally, we discuss our two new ideas for giving the main detection algorithm.

\subsection{Standard reduction to detecting distance decreases}
\label{sec:reduce}

The following is our main detection theorem. Afterwards, we explain how to apply this theorem to get an approximate incremental SSSP data structure.
\begin{theorem}
\label{thm:detect}
There is an algorithm that takes the following as input: incremental graph $G = \{G^{(t)}\}_{t \in [0,q]}$ with source $s$ such that $d_{G^{(t)}}(s, v) \in [L, 2L]$ for all $t \in [0,q]$ and $v \in V \setminus \{s\}$, parameters $0 < \alpha \le \eps$, and $m = |G^{(q)}|$. The algorithm satisfies the following properties:
\begin{enumerate}
    \item The algorithm is allowed to terminate at an iteration $\tau$ if
    \[ \sum_{v \in V} d_{G^{(\tau)}}(s, v) < (1-\alpha/2)\sum_{v \in V} d_{G^{(0)}}(s, v), \] \label{item:totaldist}
    \item Returns sets $S^{(t)}$ for $t \in [0, \tau)$ such that if $d_{G^{(t)}}(s, v) < (1-\eps) d_{G^{(0)}}(s, v)$ for $t < \tau$, then $v \in S^{(t')}$ for some $t' \le t$, and \label{item:sets}
    \item The sets $S^{(t)}$ are small: $\sum_{t=0}^{\tau-1} |S^{(t)}| \le \frac{\alpha\kappa}{\eps}m$ for some $\kappa \le m^{o(1)}$, independent of $\alpha, \eps$. \label{item:size}
\end{enumerate}
The algorithm is deterministic and runs in total time $m^{1+o(1)}$.
\end{theorem}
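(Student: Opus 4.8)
\emph{Proof plan.} The plan is to reduce the per-vertex detection to monitoring a few flow values in a single augmented min-cost flow instance, run by the incremental IPM of \cite{BLS23, CKLMP24, BCK+24} implemented via the deterministic min-ratio-cycle data structure of \cite{CKLMP24}. The starting point is that $\sum_{v \in V} d_G(s,v)$ is exactly the optimum of the min-cost flow problem on $G$ that routes one unit from $s$ to every other vertex, edge lengths as costs and generous finite capacities; adding a super-sink $\sigma$ with unit-capacity zero-length arcs $(v,\sigma)$ casts this as an $s$--$\sigma$ min-cost flow. As all relevant distances lie in $[L,2L]$, we first delete every edge of length $>2L$ (it lies on no shortest path out of $s$) and round the remaining lengths to multiples of $\Theta(\eps L/n)$; this changes every $d_{G^{(t)}}(s,v)$ by at most a $1\pm\eps$ factor (harmless after shrinking $\eps$ by a constant) and makes the lengths polynomially bounded, so a single Dijkstra call on $G^{(0)}$ produces exact reference distances $r_v := d_{G^{(0)}}(s,v)$. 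Now build an auxiliary incremental graph $H = \{H^{(t)}\}$ from $\{G^{(t)}\}$ by adding, for each $v$ with $\beta_v := (1-\eps/3)r_v \ge L$, a shortcut arc $e_v = (s,v)$ of length $\beta_v$. Then the optimum of the augmented instance at time $t$ equals $\sum_v \min(d_{G^{(t)}}(s,v),\beta_v)$, and because $\beta_v$ lies strictly between $(1-\eps)r_v$ and $r_v$, in any sufficiently central solution $e_v$ carries almost a full unit of flow exactly while $d_{G^{(t)}}(s,v)$ has \emph{not} dropped by a $(1-\eps)$ factor, and essentially no flow once $d_{G^{(t)}}(s,v) < \beta_v$ (equivalently once $v$'s distance has dropped by at least a $1-\eps/3$ factor).

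Run the incremental IPM on $H$, feeding it the insertions of $G$ and maintaining relative accuracy $\delta = 1/\mathrm{poly}(m)$; after each processed insertion the maintained flow $f$ is $\delta$-approximately central for $H^{(t)}$, and by the guarantees of \cite{BLS23, CKLMP24, BCK+24} (and \cite{CKLMP24}) the total time over all $m$ insertions is $m^{1+o(1)}$, since the $\log(1/\delta)$ factor and the $\log$ of the polynomially bounded costs are each $m^{o(1)}$. Take $S^{(t)}$ to be the set of vertices $v$, not previously output, whose shortcut flow $f_{e_v}$ has dropped below $\tfrac12$ by the time insertion $t$ is processed. Correctness (property~\ref{item:sets}) is a standard central-path argument: the potential $\mathbf y$ attached to the $\delta$-central point is approximately dual-feasible with $\mathbf y_v - \mathbf y_s$ within a $1\pm\eps$ factor of $\min(d_{G^{(t)}}(s,v),\beta_v)$, the reduced cost of $e_v$ equals $\beta_v - (\mathbf y_v - \mathbf y_s)$, and $f_{e_v}$ is determined by it -- so $d_{G^{(t)}}(s,v) < (1-\eps)r_v$ forces $f_{e_v} < \tfrac12$ (no missed drop) while $f_{e_v} < \tfrac12$ certifies $d_{G^{(t)}}(s,v) \le (1-\eps/3)r_v$ (no vacuous output). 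Since $S^{(t)}$ is read only after the IPM has re-centered for $H^{(t)}$, and $d_{G^{(t)}}(s,v)$ is non-increasing in $t$, the first $t$ with $d_{G^{(t)}}(s,v) < (1-\eps)r_v$ is one at which $v$ is (or already was) output, and the same monotonicity keeps $f_{e_v} < \tfrac12$ afterwards, so $v$ is output exactly once.

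For termination (property~\ref{item:totaldist}), run in parallel a second copy of the incremental IPM on $G$ itself to accuracy below $1/(2 D(0))$, where $D(t) := \sum_{v\in V} d_{G^{(t)}}(s,v)$; as the rounded costs are polynomially bounded this again costs $m^{1+o(1)}$ and lets us read off the exact integer $D(t)$ at every step, so we terminate at the first $\tau$ with $D(\tau) < (1-\alpha/2)D(0)$. Property~\ref{item:totaldist} then holds by construction, and for every $t < \tau$ we have $D(0) - D(t) \le \tfrac{\alpha}{2}D(0) \le \alpha n L$ (using $D(0) \le 2nL$). Combining with detection -- each output of $v$ witnesses a decrease $r_v - d_{G^{(t)}}(s,v) \ge \tfrac{\eps}{3}L$ which, since $G$ is incremental, persists thereafter -- gives $\big(\sum_{t<\tau}|S^{(t)}|\big)\cdot\tfrac{\eps}{3}L \le \sum_{v\in V}\big(r_v - d_{G^{(\tau-1)}}(s,v)\big) = D(0) - D(\tau-1) \le \alpha n L$, hence $\sum_{t<\tau}|S^{(t)}| \le \tfrac{3\alpha n}{\eps} \le \tfrac{3\alpha m}{\eps}$, i.e.\ property~\ref{item:size} with $\kappa = O(1)$. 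The total running time is $m^{1+o(1)}$: one Dijkstra call, two incremental-IPM runs, and the monitoring of the $n-1$ shortcut flows -- whose total variation over the run is $m^{1+o(1)}$ because each shortcut has length $\ge L$, so the data structure can be augmented to report the at most $n-1$ downward crossings of $\tfrac12$ within that budget.

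I expect the crux to be making detection \emph{robust}: the incremental IPM guarantees only \emph{approximate} centrality, and it restores centrality only \emph{gradually} after an insertion, so one must argue both that re-centering is quick enough that a genuine $(1-\eps)$-drop surfaces in $f_{e_v}$ by the end of the corresponding insertion (no missed detections) and that transient non-centrality cannot push any $f_{e_v}$ below $\tfrac12$ without a real decrease (no wasted budget). The multiplicative gap $\Theta(\eps r_v)$ between $\beta_v$ and $(1-\eps)r_v$, together with running the IPM accurately enough that $f_{e_v}$ stays near $1$ for an undropped $v$, is what supplies the needed slack, and is why a coarse ``$f_{e_v} < \tfrac12$'' read-out rather than an accurate distance estimate is both sufficient and what one wants. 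The other delicate point is coupling the shortcut-flow monitoring to the internal update mechanism of the min-ratio-cycle data structure so that it is output-sensitive and stays within $m^{1+o(1)}$ total time. The problem augmentation, the coarse read-out, and the coupled monitoring are the nonstandard modifications of the IPM alluded to in the abstract.
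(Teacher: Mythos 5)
Your high-level setup matches the paper's Idea \#1 --- augment the graph with per-vertex shortcut arcs from $s$, run an incremental min-cost flow IPM, and read off ``dangerous'' vertices from the flow on those arcs, with the total-distance threshold triggering termination. But your accounting for \cref{item:size} relies on a claim that the paper's IPM does \emph{not} deliver, and which is precisely the reason the paper introduces the nonstandard barrier of Idea \#2.

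The specific gap is the ``no vacuous output'' assertion: that $f_{e_v} < \tfrac12$ certifies a genuine drop $d_{G^{(t)}}(s,v) \le (1-\eps/3)r_v$, which is what lets you charge every output of $v$ against the real decrease in $D(t)$ and conclude $\sum_t |S^{(t)}| \le O(\alpha n/\eps)$. Approximate centrality only produces a dual potential $\hphi$ with $\hphi(v) \le d_{G^{(t)}}(s,v)$ together with a slack bound of the form $\beta_v - \hphi(v) \approx (\ell^\top f - F^*)/(m f_{e_v}) \approx \alpha L/f_{e_v}$ (see \cref{lemma:phiprop}). This is exactly a \emph{lower}-bound certificate: if $f_{e_v}$ is large then the slack is small, $\hphi(v)$ is close to $\beta_v$, and $v$'s distance has \emph{not} dropped --- i.e.\ no missed detections. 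It gives no \emph{upper} bound on $d_{G^{(t)}}(s,v)$ when $f_{e_v}$ is small, because potentials only bound distances from below. In fact, with the standard $-\log f_e$ barrier and duality gap $\approx \alpha F$, the IPM center is free to let $f_{e_v}$ drift well below $1$ on a moving set of roughly $\alpha m/\eps$ shortcut arcs at a time, and --- as the paper emphasizes in \cref{sec:idea1} (``we only know such $\phi$ exist, and we do not have access to them \ldots they can change quickly between iterations'') --- the cumulative number of such spurious drops over the run is what you must control, not the snapshot size. With the log barrier you only get $\sum_t |S^{(t)}| \le m^{1+o(1)}$, missing a crucial factor of $\alpha$; the recourse calculation at the end of \cref{sec:idea1} shows this is fatal for an almost-linear final runtime. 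The paper's fix is the $x^{-p}$ barrier on $\hat{E}$ (Idea \#2), whose stability bound \cref{lemma:hechange} shows that pushing any $w_{e_v}$ from $1$ up past $\eps/\alpha$ consumes $\Omega_p((\eps/\alpha)^{p/(p+1)})$ of the total $\ell_1$-budget, thereby bounding the number of dangerous arcs --- false positives included --- by $\alpha m^{1+o(1)}/\eps$ without ever needing to certify that a flagged vertex truly decreased.

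Two smaller issues: because you set $\beta_v = (1-\eps/3)r_v < r_v$, the shortcut strictly undercuts every true $s$--$v$ path at initialization, so the optimal objective in $H^{(0)}$ is already $(1-\eps/3)F$ rather than $F$; the threshold $F^*$ and the ``distance decreased'' read-out must be rebased around that, and your termination criterion (defined on $G$) and your detection criterion (defined on $H$) live on different objectives. The paper avoids this by setting the shortcut lengths to exactly $r_v$ and placing a separate thresholded-flow instance (\cref{thm:incrflow}) on $G$ for \cref{item:totaldist}. Also, ``run the IPM to accuracy $1/(2D(0))$ to read off the exact integer $D(t)$ at every step'' is not how the incremental IPM of \cite{CKLMP24} is used --- it detects the first crossing of a fixed threshold, it does not maintain the optimum explicitly per step --- though this is easily repaired by invoking \cref{thm:incrflow} with threshold $(1-\alpha/2)F$ as the paper does.
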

Let us explain how to go from this to an SSSP algorithm with almost-linear runtime. This is formally done in \cref{sec:recurse}. First, by standard reductions we may assume that all vertices have indegree and outdegree at most $3$, and that all lengths are polynomially bounded. Also, by modifying the graph by adding a weight $L$ from before the source $s$, and edges of length $2L$ to all vertices, we can assume all shortest path lengths are in $[L, 2L]$. By going over all $L$, this suffices to maintain a $(1+\eps)$-approximation for all original shortest path lengths.

At the start of the algorithm, compute an exact SSSP tree from $s$. Because the graph is incremental, the shortest path lengths can only decrease -- our goal is to detect when this happens and to report that the algorithm is now unsure of the length of the shortest path to these vertices, and all other vertices are guaranteed to have their shortest path lengths at least $(1-\eps)$ of the original value. Let $S$ be the set of vertices that are reported. We can recursively define a graph with all vertices except those in $S$ contracted away (formally, see \cref{def:gds}). This preserves \emph{all} shortest paths to $S$ up to a $(1-\eps)$ factor, because we have a $(1-\eps)$-approximation to the shortest path length of the vertices we have contracted away. Note that the contracted graph is incremental if the original graph is.
Finally, restart whenever \cref{item:totaldist} triggers, and recompute the true SSSP distances, and set $S$ to the empty set.

Thus, it suffices to verify the runtime. \cref{item:totaldist} can trigger at most $O(1/\alpha)$ times. The graph at the next level undergoes at most $\alpha m^{o(1)}$ edges insertions during a phase between recomputations of the SSSP, so the amortized recourse is $m^{o(1)}$. By setting $\alpha$ to be sufficiently small, this becomes an almost-linear time algorithm.

\subsection{Prior work on incremental interior point method}
\label{sec:ipm}

The starting point of our algorithm for \cref{thm:detect} is the recent almost-linear-time algorithm of \cite{CKLMP24} for incremental thresholded mincost flow. Informally, the main result of \cite{CKLMP24} is that given a fixed demand $d$ and an incremental graph $G$, we can find the first edge insertion after which the minimum cost of a flow routing demand $d$ is at most some given threshold $F^*$.

\begin{theorem}[\!\!{\cite[Theorem 1.4]{CKLMP24}}]
\label{thm:incrflow}
There is an algorithm that takes an incremental graph $G$ undergoing $m$ edge insertions with polynomially-bounded capacities and costs, a fixed demand $d$, and threshold $F^*$, and finds the first insertion after which the mincost flow routing demand $d$ has cost at most $F^*$. The algorithm is deterministic and runs in total time $m^{1+o(1)}$.
\end{theorem}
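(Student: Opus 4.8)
The plan, following \cite{CKLPPS22,BLS23,CKLMP24}, is to solve min-cost flow by a \emph{potential-reduction interior point method (IPM)} and to implement each of its steps in $m^{o(1)}$ amortized time using a dynamic min-ratio cycle data structure. First I would write routing the fixed demand $d$ as the linear program $\min_{B^\top f = d,\ \ell \le f \le u} c^\top f$ and introduce the potential
\[ \Phi(f) \;=\; 20m\log\!\big(c^\top f - F^*\big) \;+\; \sum_{e}\psi_e(f_e), \]
where $\psi_e$ is a regularized logarithmic barrier for the capacity interval of edge $e$. From \cite{CKLPPS22} I would import the three facts that drive the method: (i) whenever the current feasible $f$ satisfies $c^\top f > F^*$, some circulation $\Delta$ (i.e. $B^\top\Delta = 0$) decreases $\Phi$ by $m^{-o(1)}$; (ii) a near-optimal such $\Delta$ is produced by approximately solving the min-ratio cycle problem $\min_{B^\top\Delta=0}\langle \g\Phi(f),\Delta\rangle/\norm{\mathbf{L}\Delta}_1$, where the diagonal length matrix $\mathbf{L}$ is read off from the barrier Hessian; and (iii) an $m^{o(1)}$-approximate min-ratio cycle still realizes a $\Phi$-decrease of $m^{-o(1)}$. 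Because capacities and costs are polynomially bounded, $\Phi$ varies over an interval of width $m^{1+o(1)}$, so on any fixed graph the method drives $c^\top f - F^*$ below any prescribed $1/\mathrm{poly}(m)$ within $m^{1+o(1)}$ steps.

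Second, I would make the method incremental. The structural point is that an edge insertion can only \emph{decrease} the optimal cost of routing $d$, so fact (i) --- which needs nothing but $c^\top f > F^*$ --- is never invalidated. When an edge $e$ arrives I would initialize it at the center of its (rescaled) capacity interval and repair feasibility of the maintained flow, as in \cite{BLS23,CKLMP24}; this raises $\Phi$ by only $O(\log m)$, so over all $m$ insertions the cumulative increase is $m^{1+o(1)}$ and is absorbed into the step budget. Thus running the IPM continuously through the entire insertion stream costs $m^{1+o(1)}$ steps in total. To report the first insertion after which the optimum is at most $F^*$, I would keep (via convergence of the IPM) a $(1+m^{-\Omega(1)})$-approximation of the current minimum cost --- because the optimum is monotone non-increasing, only $m^{o(1)}$ additional steps are ever needed per insertion to re-converge --- and, using that the constraint matrix of the min-cost flow LP is totally unimodular so that the optimum is an integer, a single rounded comparison against $F^*$ identifies the sought insertion.

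The remaining ingredient, which I expect to be the main obstacle, is implementing each IPM step in $m^{o(1)}$ \emph{amortized} time with a \emph{deterministic} data structure. Each step is one $m^{o(1)}$-approximate min-ratio cycle query for the current $(\g\Phi(f),\mathbf{L})$ followed by a flow update along the returned cycle; I would use the dynamic min-ratio cycle data structure of \cite{CKLMP24} (built on \cite{CKLPPS22}), which maintains an approximate min-ratio cycle under edge insertions to $G$ together with coordinate updates to the gradient and lengths. Two quantitative claims must be argued with care. First, the total number of data-structure updates over the whole run is $m^{1+o(1)}$: the IPM only reports a coordinate's movement once it has accumulated past a fixed relative threshold, and the step-size bound $\norm{\mathbf{L}\Delta}_1 = m^{-o(1)}$ (plus the slow, $(1+m^{-o(1)})$-per-step drift of the global scalar multiplying the cost part of $\g\Phi$) makes the number of such threshold crossings $m^{1+o(1)}$ in total. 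Second, each update and query costs $m^{o(1)}$ amortized, which rests on a low-stretch spanning-tree / tree-chain hierarchy maintained through a carefully scheduled sequence of deterministic rebuilds that must remain compatible with the ongoing edge insertions. Coupling the IPM's update pattern, the insertion stream, and this rebuild schedule so that everything stays inside an $m^{o(1)}$ amortized budget is the technical heart of the argument; granting it, the total running time is $m^{1+o(1)}$.
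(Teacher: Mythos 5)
This statement is not proved in the paper at all: it is imported verbatim as Theorem~1.4 of \cite{CKLMP24} and used as a black box (specifically, in \cref{sec:detectalgo} to detect when \cref{item:totaldist} of \cref{thm:detect} triggers). There is no in-paper proof to compare against, so the relevant question is only whether your sketch is a faithful account of the cited result.

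At the level of strategy your sketch matches: a potential-reduction IPM with a $\log(c^\top f - F^*)$ barrier term, progress via approximate min-ratio cycles (\cref{lemma:progress,lemma:approx,lemma:dual} in this paper are the SSSP-specialized analogues), a recourse/stability argument bounding the number of coordinate updates to $m^{1+o(1)}$, and absorbing the potential increase from edge insertions into the overall step budget. A few details are off, though. First, the ``re-converge in $m^{o(1)}$ steps per insertion'' phrasing is misleading: after a single insertion the IPM may need many steps, and the $m^{1+o(1)}$ bound is a global amortized count, not a per-insertion one; the actual argument tracks the total potential increase from all insertions. Second, in the incremental setting a new edge is inserted with flow $0$ (and capacities chosen so that $0$ is strictly feasible, e.g., a tiny backward capacity $\delta$ as this paper does in \cref{sec:detectalgo}), which avoids the ``repair feasibility'' step you invoke; centering the flow on a new edge would in general destroy feasibility of the demand constraint $B^\top f = d$. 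Third, and most substantively, your description of the data structure is the one from \cite{CKLPPS22} (randomized low-stretch spanning-tree / tree-chain hierarchy), but the claim here is that the algorithm is \emph{deterministic}: as the paper notes, \cite{CKLMP24}'s deterministic min-ratio cycle solver (\cref{thm:mrc}) is instead built on $\ell_1$ oblivious routings and fully dynamic APSP, which is a genuinely different construction. If you were writing a self-contained proof of \cref{thm:incrflow}, that deterministic data structure is what you would need to cite or re-derive.
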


In the case of SSSP, by setting $d = (-(n-1), 1, \dots, 1)$ we can find the first insertion after which $\sum_{v \in V} d_G(s, v) \le F^*$. In our setting, it is most natural to set $F = \sum_{v \in V} d_{G^{(0)}}(s, v)$, i.e., the initial sum of lengths, and set $F^* = (1-\alpha)F$ for some very small $\alpha = m^{-o(1)}$. In other words, we want to find the first time after which the total shortest path length has gone down by $\alpha m$, which by Markov's inequality also implies that at most $O(\alpha m/\eps)$ vertices have their shortest path lengths decrease by $(1-\eps)$ (recall that $d_G(s, v) \in [L, 2L]$ for all $v$).

With this setup, if $\ell$ is the length vector, the approach of \cite{BLS23,CKLMP24} can be summarized as designing an algorithm to decrease the following potential function:
\[ \Phi(f) := 10m \log(\ell^\top f - F^*) + \sum_{e \in E} -\log f_e. \]
This is done by repeatedly approximately solving a \emph{minimum-ratio cycle} problem: $\min_{B^\top \Delta = 0} \frac{g^\top \Delta}{\|W\Delta\|_1}$ where $W = \mathsf{diag}(w)$ for $w_e = 1/f_e$ and $g = \g \Phi(f)$, i.e., $g_E = \frac{10m}{\ell^\top f - F^*} \ell_e - 1/f_e$. The work of \cite{BLS23,CKLMP24} proves that designing a data structure to solve $m^{1+o(1)}$ minimum-ratio cycle problems up to $m^{o(1)}$ approximation over updates to $g, w$ suffices to give an almost-linear time algorithm for incremental mincost flow. Over these $m^{1+o(1)}$ steps, the algorithm maintains approximations $\wt{g} \approx g$ and $\wt{w} \approx w$ that change entries only $m^{1+o(1)}$ total times. We refer to this as \emph{stability} of the IPM. Formally, we maintain a flow $\wt{f}_e$ which changes on edges only $m^{1+o(1)}$ total times such that $w_e|f_e-\wt{f}_e| \le m^{-o(1)}$ for all $e$, and use $\wt{f}_e$ to define $\wt{g}_e$ and $\wt{w}_e$. A deterministic data structure for min-ratio cycles was designed in \cite{CKLMP24} based on $\ell_1$ oblivious routing constructions and fully dynamic APSP. We summarize its guarantees below.

\begin{definition}[\!{\cite[Definition 3.7]{CKLMP24}}]
    \label{def:solver_ds}    
    We call a data structure $\mathcal{D} = \textsc{Solver}(G, w, g, \ell, f, q, \Gamma, \eps)$ initialized with 
\begin{itemize}
    \item a graph $G = (V, E)$ and
    \item weights $w \in \R^{E}_{\geq 0}$, gradients $g \in \R^{E}$, costs $\ell$, a flow $f \in \R^{E}$ routing demand $d$, and 
    \item a quality parameter $q > 0$, a step-size parameter $\Gamma > 0$ and a accuracy parameter $\eps > 0$. 
\end{itemize}
a $\gamma_{\mathrm{approx}}$ min-ratio cycle \textbf{solver} if it (implicitly) maintains a flow vector $f$ such that $f$ routes demand $d$ throughout and supports the following operations. 
\begin{itemize}
    \item $\textsc{ApplyCycle}()$: One of the following happens.  
        \begin{itemize}
            \item Either: The data structure finds a circulation $\Delta \in \R^{E}$ such that $g^\top \Delta / \|W\Delta\|_1 \leq -q$ and $\|W \Delta\|_1 = \Gamma$. In that case it updates $f \gets f + \Delta$ and it returns a set of edges $E' \subseteq E$ alongside the maintained flow values $f_{e'}$ for $e' \in E'$. 

            For every edge $e$, between the times that it is in $E'$ during calls to $\textsc{ApplyCycle}()$, the value of $w_ef_e$ does not change by more than $\epsilon$.
            \item Or: The data structure certifies that there is no circulation $\Delta \in \R^{E}$ such that $g^\top \Delta / \|W\Delta\|_1 \leq -q/\gamma_{\mathrm{approx}}$ for some parameter $\gamma_{\mathrm{approx}}$.
        \end{itemize}
    \item $\textsc{UpdateEdge}(e, w, g)$: Updates the weight and gradient of an edge $e$ that was returned in the set $E'$ by the last call to $\textsc{ApplyCycle}()$.
    \item $\textsc{InsertEdge}(e, w, g, \ell)$: Adds edge $e$ to $G$ with weight $w$, gradient $g$, cost $\ell$. The flow $f_e$ is initialized to $0$.
    \item $\textsc{ReturnCost}()$: Returns the flow cost $\ell^\top f$.
    \item $\textsc{ReturnFlow}()$: Explicitly returns the currently maintained flow $f$. 
\end{itemize}
The sum of the sizes $|E'|$ of the returned sets by $t$ calls to $\textsc{ApplyCycle}()$ is at most $t\Gamma/\epsilon q$. 
\end{definition}

\begin{theorem}[\!\!{\cite[Theorem 3.8]{CKLMP24}}]
\label{thm:mrc}
There is a data structure $\mathcal{D} = \textsc{Solver}(G, w, g, \ell, f, q, \Gamma, \epsilon)$ as in \Cref{def:solver_ds} for $\gamma_{\mathrm{approx}} = e^{O(\log^{167/168} m)}$.
It has the following time complexity for $\gamma_{\mathrm{time}} = e^{O(\log^{167/168} m \log \log m)}$ where $m$ is an upper bound on the total number of edges in $G$:
\begin{enumerate}
    \item The initialization takes time $|E| \gamma_{\mathrm{time}}$.
    \item The operation $\textsc{ApplyCycle}()$ takes amortized time $(|E'| + 1) \cdot \gamma_{\mathrm{time}}$ when $|E'|$ edges are returned.
    \item The operations $\textsc{UpdateEdge}()$/$\textsc{InsertEdge}()/\textsc{ReturnCost}()$ take amortized time $\gamma_{\mathrm{time}}$. 
    \item The operation $\textsc{ReturnFlow}()$ takes amortized time $|E|\cdot \gamma_{\mathrm{time}}$.
\end{enumerate}
\end{theorem}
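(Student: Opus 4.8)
This theorem is quoted verbatim from \cite{CKLMP24}; for orientation we sketch the shape of the argument. The plan is to reduce, up to an $m^{o(1)}$ factor, the min-ratio cycle problem to routing along a bounded family of forests, and then to maintain those forests and the statistics needed to search them under edge insertions. For the static step, one uses an $\ell_1$ oblivious routing / low-stretch decomposition to build $\O(1)$ forests on $V$ such that, for \emph{every} choice of $w$ and $g$, the best circulation supported on one of the forests has ratio $g^\top \Delta / \norm{W\Delta}_1$ within an $m^{o(1)}$ factor of the optimum. Then $\textsc{ApplyCycle}()$ becomes: search each forest, with a dynamic-tree structure (link-cut / top trees augmented with gradient- and weight-sums), for the best supported cycle; if its ratio is at most $-q$, rescale it to $\norm{W\Delta}_1 = \Gamma$, push it, update $f$, and return the touched edges; otherwise the approximation guarantee certifies that no circulation beats $-q/\gamma_{\mathrm{approx}}$.

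The raw oblivious-routing loss, and the bound on the number of forests, are far too large on their own, so the construction is recursive -- a \emph{branching tree-chain}. Given $G$, run a deterministic expander decomposition, sparsify and contract the expander pieces to get a ``core graph'' on a constant-factor fewer vertices, recurse on the core, and within each expander piece route using a deterministic low-stretch spanning forest. Iterating down to constant size, with each level multiplying the quality by a polylogarithmic factor and a carefully tuned number of levels, is exactly what produces $\gamma_{\mathrm{approx}} = e^{O(\log^{167/168} m)}$, with the recursion overhead contributing the extra $\log\log m$ in $\gamma_{\mathrm{time}}$. Each ingredient -- expander decomposition, low-stretch routing, sparsification -- must be the deterministic variant, which is why the final data structure is deterministic.

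To make the object incremental, the key point is that an edge insertion damages the hierarchy only locally: at each level it can spoil expansion or sparsification inside just one piece. The standard fix is a lazy rebuild schedule -- a subproblem on $\approx N$ vertices is rebuilt from scratch once about $N$ insertions have hit it -- so rebuild costs telescope to $|E|\cdot\gamma_{\mathrm{time}}$ over the sequence, while between rebuilds a deterministic fully dynamic APSP data structure run inside each (still-expanding) piece keeps the within-piece distances, hence the routing operators, up to date and flags which edges' weight/gradient the solver must refresh. This makes $\textsc{UpdateEdge}$ a handful of leaf updates ($\gamma_{\mathrm{time}}$ amortized), $\textsc{InsertEdge}$ additionally charge to the rebuild potential, $\textsc{ReturnCost}$ a scalar read, and $\textsc{ReturnFlow}$ a single pass down the chain of cost $|E|\cdot\gamma_{\mathrm{time}}$. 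The batched output of $\textsc{ApplyCycle}$ and the bound $t\Gamma/(\eps q)$ on $\sum|E'|$ come from a charging argument: across $t$ calls the total weighted flow moved is $O(t\Gamma)$ while each call realizes ratio at most $-q$, each materialized edge has absorbed at least $\eps$ of $w_e$-weighted flow change in magnitude, and by construction $w_e f_e$ never drifts by more than $\eps$ between consecutive appearances of $e$ in $E'$.

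The main obstacle is entirely on the deterministic-dynamic side: one needs deterministic expander decomposition and deterministic fully dynamic APSP whose per-insertion \emph{recourse} and running time are both $m^{o(1)}$ amortized, and these guarantees must compose through the recursive levels without the $m^{o(1)}$ factors blowing up. Controlling this interplay of recourse, rebuild schedule, and per-level accuracy budget is the technical core of \cite{CKLMP24}, which we invoke as a black box.
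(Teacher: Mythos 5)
The paper invokes this result as a black box, citing it directly as \cite[Theorem 3.8]{CKLMP24} without reproducing a proof, so there is nothing in this paper to compare your sketch against. You correctly identify that the statement is imported verbatim, and your high-level sketch of the branching tree-chain construction, the deterministic expander decomposition plus dynamic APSP machinery, the lazy rebuild schedule, and the $t\Gamma/(\eps q)$ charging bound on $\sum|E'|$ is a fair orientation to the argument actually carried out in \cite{CKLMP24}.
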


\subsection{Idea \#1: Using duality of min-ratio cycles}
\label{sec:idea1}

In this section, we start discussing how to prove \cref{thm:detect}. We will informally establish a version of \cref{thm:detect} where everything but \cref{item:size} holds. Instead, it will hold that $\sum_{t=0}^{\tau-1} |S^{(t)}| \le \kappa m,$
so a factor of $1/\alpha$ times larger than what is desired. This increases the resulting recourse by a factor of $\alpha$, which is unacceptable if we want almost-linear runtimes. Later, in \cref{sec:idea2}, we discuss how to build a highly nonstandard IPM to gain back this factor of $\alpha$, by heavily exploiting the structure of the SSSP problem.

Consider the log-barrier potential function as defined earlier, where $F^* = (1-\alpha)F$:
\[ \Phi(f) = 10m \log(\ell^\top f - F^*) + \sum_{e \in E} -\log f_e. \] The min-ratio cycle problem induced by this is $\min_{B^\top \Delta = 0} \frac{g^\top \Delta}{\|W\Delta\|_1}$ where $w = 1/f$ and $g = \frac{10m}{\ell^\top f - F^*} \ell - w$. If there is min-ratio cycle whose quality is at most $q = -1/4$, then \cref{thm:mrc} finds a cycle of quality at most $-\Omega(\gamma_{\mathrm{approx}})$, and thus allows us to decrease $\Phi$ by at least $m^{-o(1)}$. Otherwise, $\min_{B^\top \Delta = 0} \frac{g^\top \Delta}{\|W\Delta\|_1} \ge -1/4$ say. A key idea is that we can immediately conclude that there is a good ``dual'' potential vector $\phi$ for the SSSP problem. To see what this means, recall that a dual potential for shortest paths is a vector $\phi: \R^V \to \R$ such that $\phi(s) = 0$ and $\phi(v) - \phi(u) \le \ell_e$ for all edges $e = (u, v)$. Note that the previous condition implies that $\phi(v) \le d_G(s, v)$ for all vertices $v$. Thus, we can try to use the potential vector $\phi$ as a way to decide if the shortest path to vertex $v$ may have decreased significantly.

Let us describe how to build this dual potential $\phi$. In \cref{lemma:dual}, we use strong duality to prove that if $\min_{B^\top \Delta = 0} \frac{g^\top \Delta}{\|W\Delta\|_1} \ge -1/4$, then there is a vector $\phi$ with $\|W^{-1}(B\phi + g)\|_\infty \le 1/4$. Expanding this out and using the definition of $g$ then gives:
\[ \frac34 w_e \le \frac{10m}{\ell^\top f - F^*}\ell_e - (\phi(v) - \phi(u)) \le \frac54 w_e. \]
Now, intuitively we can set things up so that $\ell^\top f \le (1+\alpha)F$ (i.e., is approximately optimal), so using $w_e = 1/f_e$ gives for $\hphi = \frac{\ell^\top f - F^*}{10m} \phi$ that
\[ 0 \le \ell_e - (\hphi(v) - \hphi(u)) \le \frac{\alpha F}{mf_e}. \]
In other words, $\hphi$ are valid dual potentials. Additionally, one can prove that $\sum_{v \in V} \hphi(v) \ge (1-O(\alpha))F$, i.e., the sum is close to the true shortest path lengths. Now a natural idea would simply be to return ``dangerous" vertices where $\phi(v) \le (1-\eps)d_G(s, v)$ (here, $d_G(s, v)$ refers to the original shortest path length), of which there are at most $O(\alpha m /\eps)$ by Markov. However, there is a critical issue: \emph{we only know such $\phi$ exist, and we do not have access to them.} Furthermore, we do not have bounds on how fast $\phi$ changes: while there are at most $O(\alpha m / \eps)$ bad vertices at each point in time, in principle they can change quickly between iterations.

To get around this, our first idea is to add additional edges to the graph $G$ to learn more from the existence of a dual vector $\phi$. Specifically, we add edges of length $d_G(s, v)$ from $s$ to each vertex $v$ (again, $d_G(s, v)$ refers to the shortest path length in the \emph{original} graph), and give each edge a log-barrier in $\Phi$. Call this set of edges $\hat{E}$. Obviously, this does not affect the shortest path lengths. However, from $\|W^{-1}(B\phi + g)\|_\infty \le 1/4$ we know for all $e = (s, v) \in \hat{E}$:
\[ 0 \le \ell_e - \hphi(v) \le \frac{\alpha F}{mf_e} \le 2\alpha L/f_e, \]
where we have used that $\hphi(s) = 0$ and that $F \le 2Lm$. Thus if $f_e \ge \frac{2\alpha}{\eps}$ then $\ell_e - \hphi(v) \le \eps L$, and thus we know that the shortest path length to $v$ has no decreased by a $(1-\eps)$ factor. Conversely, if $f_e \le \frac{2\alpha}{\eps}$ then we are unsure if the shortest path length to $v$ has decreased. Because all previous bounds are stable up to constant multiplicative factors, an edge going from dangerous to not-dangerous implies that $f_e$ changes by a constant factor. By standard IPM stability analysis discussed earlier, this only can happen $m^{1+o(1)}$ times throughout the algorithm. This explains why the log-barrier algorithm satisfies all properties of \cref{thm:detect} other than \cref{item:size}.

\paragraph{Recourse calculation.} We briefly sketch why the extra factor of $\alpha$ in the recourse is not acceptable towards obtaining an almost-linear runtime. Consider running the detection algorithm in \cref{thm:detect} in phases. The number of phases is about $\O(1/\alpha)$ since each time the algorithm terminates, the total shortest path length decreases by $(1-\alpha)$. In each phase, imagine that the total size of the sets $S^{(t)}$ is $m^{1+o(1)}$ (as opposed to $\alpha m^{1+o(1)}$ as written in \cref{thm:detect}). Then, the total number of updates to the next layer is $O(m^{1+o(1)}/\alpha)$, while the size of the graphs is $O(\alpha m)$. Thus, in $L$ layers the size of $O(\alpha^L m)$ but the total number of updates is $m^{1+o(1)}/\alpha^L$. No matter how we trade this off, it does not obtain an almost-linear runtime. Note that this issue goes away if we save a factor of $\alpha$ in the total number of updates to the next layer, as is claimed in \cref{thm:detect}.

\subsection{Idea \#2: Modified interior point method}
\label{sec:idea2}

Let us try to understand better where the above analysis is lossy. The initial flow assigns $1$ unit on each edge $e \in \hat{E}$ before any steps occur. The logarithmic barrier on edges $e \in \hat{E}$ seems to not penalize movement away from $f_e = 1$ harshly enough. Thus, our idea is to replace the log barrier on $e \in \hat{E}$ with a high-power barrier $x^{-p}$ for $p \approx \log m$. Formally, $\Phi$ is defined in \eqref{eq:pot}, as is chosen to be $x^{-p}/p$ for $x \le 1$, and for $x \ge 1$ interpolates to a logarithmic barrier.
Now, on the each $e$ (with $f_e \le 1$) we set the weight to be $w_e = f_e^{-p-1}$ and $g = \nabla \Phi = \frac{10m}{\ell^\top f - F^*} \ell - w$ still. Note that $w_e$ is negative the gradient of the function $x^{-p}/p$ evaluated at $x = f_e$. This is quite nonstandard from the perspective of IPMs because the function $x^{-p}$ is not self-concordant, so it is very surprising that we are still able to do an IPM algorithm with this potential.

Let us explain some intuition for why such an IPM algorithm should still work in the case of SSSP. On the edge $e \in \hat{E}$ we initially assign $f_e = 1$. As long as the total sum $\sum_{v \in V} d_G(s, v)$ has not decreased by more than a $(1-\alpha)$ factor, assigning $f_e \approx 1$ still maintains a flow which approximately minimizes the sum of distances. Thus, $f_e$ doesn't have to move too far away from $1$, and thus the function $f_e^{-p}$ still looks somewhat stable. 

Formally, we are still able to prove that if the min-ratio cycle problem has a good value, then we can decrease the potential by $\Omega(1/p)$ (see \cref{lemma:progress}) by taking a step $\Delta$ with $\|W\Delta\|_1 \le 1$. The high-power barrier makes it much harder for $f_e^*$ to move away from $1$, and we can prove that the number of edges that ever become dangerous (i.e., shortest path distance went down by a $(1-\eps)$ factor) is now a factor of $\alpha$ smaller than in the case of the log-barrier (see \cref{lemma:hechange} for the formal proof). Informally, here is where the savings of $1/\alpha$ comes from. When we use the barrier $x^{-p}$, the following properties hold:
\begin{itemize}
\item The weight on the edge $e$ is its gradient which is $f_e^{-p-1}$. 
\item An edge is ``dangerous'', i.e., its shortest path length may have decreased, if $f_e^{-p-1} \ge \eps/\alpha$.
\item The ``cost'' to change the flow on edge $e$ by $\Delta_e$ is $O(f_e^{-p-1}|\Delta_e|)$, i.e., its weight times the change.
\end{itemize}
By the third property, changing $f_e$ by an additive $\Delta_e$ changes the weight $f_e^{-p-1}$ by a multiplicative $1+O(p\Delta_e/f_e) = 1+O(pf_e^p \cdot f_e^{-p-1}\Delta_e)$ factor. Thus, the cost of an edge's weight increasing to $\eps/\alpha$ is $\Omega_p(x^{-p})$ where $x^{-p-1} \approx \eps/\alpha$. When $p \approx \log m$, $x \ge 1/2$ so $x^{-p} \approx x^{-p-1} \approx \eps/\alpha$. The total ``cost'' available is at most $m^{1+o(1)}$, which is the number of steps in the algorithm. Thus the number of dangerous edges is bounded by $\alpha m^{1+o(1)}/\eps$, as desired. Note that this bound is in fact tight up to the $m^{o(1)}$, because within a phase where $\sum_{v \in V} d_G(s, v)$ decreased by a factor of $1-\alpha$, if all $d_G(s, v) \in [L, 2L]$, the true number of vertices $v \in V$ whose $d_G(s, v)$ decreased by a $(1-\eps)$ factor is at most $O(\alpha m/\eps)$.

\begin{remark}
This is the first instance the author has seen of using a non-logarithmic barrier within an interior point method that can be successfully analyzed. We are very interested in whether barriers qualitatively different than $-\log x$ can be used in other settings, or whether this is a trick very specific to the SSSP problem.
\end{remark}

\section{Detecting Dangerous Vertices}
\label{sec:detect}
The goal of this section is to establish \cref{thm:detect}. We assume $L = 1$ by scaling. We start by defining the potential function we use to design the IPM. As discussed in \cref{sec:idea1}, we augment the graph $G$ with extra edges to help detect when distances decrease.
\begin{definition}[Augmented edges]
\label{def:augment}
Consider initial distances $d_v = d_{G^{(0)}}(s, v)$. Let $\hat{E}$ be a set of edges $(s, v)$ of length $d_v$ for all $v \in V \setminus \{s\}$.
\end{definition}
We will design a potential function on the graph $G$ with the edges in $\hat{E}$ added. We will always let $E$ be the original edges of the graph $G$.

We now define barrier functions on each edge. Edges $e \in E$ will have the standard log-barrier potential, but edges $e \in \hat{E}$ will have an alternate potential.
\begin{definition}
\label{def:pot}
For a positive integer $p$, define a function $V: \R_{>0} \to \R$ as:
\[ \begin{cases}
    V(x) = x^{-p}/p & \text{ for } x \le 1 \\
    V(x) = 1/p-\log x & \text{ for } x \ge 1.
\end{cases} \]
\end{definition}
Note that $V$ is continuously differentiable. We will take $p = O(\log(1/\alpha))$ in our algorithm. Let $F = \sum_{v \in V} d_v$ and $F^* = (1-\alpha)F$. Define
\begin{equation}
    \Phi(f) = 10m \log(\ell^\top f - F^*) - \sum_{e \in E} \log(f_e+\delta_e) + \sum_{e \in \hat{E}} V(f_e), \label{eq:pot}
\end{equation}
where $\delta_e > 0$ are some ``slack'' parameters to handle a technical issue (think of $\delta_e = 1/m^{10}$). From here, we are ready to define the gradients and lengths of the min-ratio cycle problem.
\begin{definition}
\label{def:mrc}
Given a flow $f$, define the gradient $g(f) = \g \Phi(f)$ and weights $w(f) \in \R^{E \cup \hat{E}}$ as follows: $w(f)_e = \frac{1}{f_e+\delta_e}$ for $e \in E$, and $w(f)_e = -\g V(f_e)$ for $e \in \hat{E}$.
\end{definition}
Note that $g(f) = \frac{10m}{\ell^\top f - F^*} \ell - w(f)$.

\subsection{Single step analysis}
\label{sec:step}

We first prove that there is a step to decrease the potential if $\ell^\top f$ is much larger than $F$. This should intuitively hold, because the total shortest path length is $F$, so there is no real reason the algorithm should be working with flows with total length much greater than $F$.
\begin{lemma}
\label{lemma:value}
Let $f$ be a flow routing demand $d = (n-1, -1, \dots, -1)$ such that $\ell^\top f > (1+5\alpha)F$. Let $g, w$ be defined for $f$ as in \cref{def:mrc}. Then there is a circulation $\Delta$ with $\frac{g^\top \Delta}{\|W\Delta\|_1} \le -1$.
\end{lemma}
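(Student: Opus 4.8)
The plan is to produce the circulation explicitly by comparing $f$ with the \emph{direct} flow $f^\circ$ that sends one unit from $s$ to each $v$ along the augmented edge $(s,v)\in\hat{E}$; that is, $f^\circ_e=1$ for $e\in\hat{E}$ and $f^\circ_e=0$ for $e\in E$. Since $(s,v)\in\hat{E}$ has length $d_v$, this flow has cost $\ell^\top f^\circ=\sum_v d_v=F$, and it routes the same demand $d$ as $f$, so $\Delta:=f^\circ-f$ is a circulation. I would then show that this single $\Delta$ already satisfies $\frac{g^\top\Delta}{\|W\Delta\|_1}\le -1$. Note $\Delta\neq 0$ (else $\ell^\top f=F$, contradicting the hypothesis) and $W$ has strictly positive diagonal, so $\|W\Delta\|_1>0$ and it is equivalent to prove $g^\top\Delta+\|W\Delta\|_1\le 0$.

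Using $g=\frac{10m}{\ell^\top f-F^*}\ell-w$ and writing $A:=\ell^\top f-F^*=\ell^\top f-(1-\alpha)F$, I would split the target quantity as $\frac{10m}{A}\ell^\top\Delta+\big(-w^\top\Delta+\|W\Delta\|_1\big)$ and bound the two pieces separately. For the weight piece, the key manipulation is that $-\Delta=f-f^\circ$ gives
\[ -w^\top\Delta+\|W\Delta\|_1=\sum_e w_e\big((f_e-f^\circ_e)+|f_e-f^\circ_e|\big)=2\sum_{e:\,f_e>f^\circ_e} w_e(f_e-f^\circ_e). \]
On $e\in E$ we have $f^\circ_e=0$ and $w_e=1/(f_e+\delta_e)$, so each surviving term equals $f_e/(f_e+\delta_e)<1$; on $e=(s,v)\in\hat{E}$ a term survives only when $f_e>1$, where $w_e=-\g V(f_e)=1/f_e$ and the term equals $(f_e-1)/f_e<1$. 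Hence the weight piece is at most $2(|E|+|\hat{E}|)\le 4m$ (using that the degree reduction makes $|\hat E|=n-1\le m$), which is comfortably below $\tfrac{25}{3}m$.

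For the cost piece, $\ell^\top\Delta=\ell^\top f^\circ-\ell^\top f=F-\ell^\top f=\alpha F-A$, and the hypothesis $\ell^\top f>(1+5\alpha)F$ forces $A>6\alpha F>0$, so
\[ \frac{10m}{A}\ell^\top\Delta=\frac{10m}{A}(\alpha F-A)=-10m+\frac{10m\,\alpha F}{A}<-10m+\frac{10m}{6}=-\tfrac{25}{3}m. \]
Adding the two pieces gives $g^\top\Delta+\|W\Delta\|_1<-\tfrac{25}{3}m+4m<0$, as required.

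The step I expect to be the main thing to get right is the choice of competitor flow: taking the direct flow $f^\circ$ along $\hat{E}$ — rather than, say, a min-cost flow in $G^{(t)}$ — is precisely what makes $\sum_{e:\,f_e>f^\circ_e}w_e(f_e-f^\circ_e)$ cleanly $O(m)$, since on every edge either $f^\circ_e=0$ (so the term is $w_ef_e<1$) or $f^\circ_e=1$ sits on an $\hat{E}$ edge in the logarithmic regime $f_e>1$ of the barrier $V$. A competitor that fails to dominate $f$ on the $\hat{E}$ edges where $f$ has collapsed below $1$ would reintroduce the enormous weights $f_e^{-p-1}\gg 1$ into this sum and break the bound. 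The only other care needed is algebraic: expressing $\ell^\top\Delta$ through $A=\ell^\top f-F^*$ so that the estimate $-10m+\tfrac{10m\alpha F}{A}$ does not weaken when $\ell^\top f$ (hence $A$) is large. No centrality or proximity assumption on $f$ is required.
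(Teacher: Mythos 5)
Your proof is correct, and the overall strategy matches the paper's: exhibit a competitor flow routing $d$, set $\Delta$ to be its difference from $f$, and bound the cost and weight contributions to $g^\top\Delta+\|W\Delta\|_1$ separately. Where you diverge is the competitor itself. The paper takes $f^*_e=\min(f_e,1)$ on $\hat E$ (so $\Delta_e=0$ whenever $f_e<1$) and ships the residual $1-f^*_{e_v}$ units to each $v$ along a shortest path in $E$, giving $\ell^\top f^*\le F$ by an inequality; you instead take the purely direct flow $f^\circ_e=1$ on $\hat E$, $f^\circ_e=0$ on $E$, giving $\ell^\top f^\circ=F$ exactly. Both choices serve the same structural purpose that you correctly flagged as the crux --- ensuring that on every $\hat E$ edge the ``bad'' direction $\Delta_e<0$ occurs only in the logarithmic regime $f_e>1$, so $w_e|\Delta_e|<1$ and the weight piece stays $O(m)$ --- but yours is a touch cleaner because it avoids any reference to shortest paths in the current graph $G^{(t)}$ and turns the cost computation into an identity. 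Your numerical slack ($-\tfrac{25}{3}m+4m$) is just as comfortable as the paper's ($-8m+4m$), and your remark that no proximity or centrality assumption on $f$ is needed is accurate.
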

\begin{proof}
We will define a flow $f^*$ routing demand $d$ as follows. For an edge $e \in \hat{E}$, if $f_e \ge 1$ then set $f_e^* = 1$. Otherwise, set $f_e^* = f_e$. Now define $f^*$ on $e \in E$ as follows. Let $e_v \in \hat{E}$ correspond to vertex $v$. Then add to $f^*$ a shortest path flow consisting of $1 - f_{e_v}^*$ units from $s \to v$. Define $\Delta = f^* - f$ and note that
\[ \ell^\top \Delta = \ell^\top f^* - \ell^\top f \le \sum_{v \in V} d_G(s, v) - \ell^\top f \le F - \ell^\top f. \]
Thus the gradient term can be bounded as
\begin{align*}
    g^\top \Delta &= \frac{10m}{\ell^\top f - F^*} \ell^\top \Delta - \sum_{e \in E} \frac{\Delta_e}{f_e+\delta_e} - \sum_{e \in \hat{E}} w_e\Delta_e \\
    &\le -8m + \sum_{e \in E} 2 - \frac{|\Delta_e|}{f_e+\delta_e} + \sum_{e \in E} 2 - w_e|\Delta_e| \\
    &\le -4m - \|W\Delta\|_1,
\end{align*}
as desired. In the first inequality, we have used that
\[ \frac{\ell^\top \Delta}{\ell^\top f - F^*} \le \frac{\ell^\top f - F}{\ell^\top f - F^*} \le -4/5, \]
because $F^* = (1-\alpha)F$ and $\ell^\top f \ge (1+5\alpha)F$.
Additionally, we have used the inequality $-w_e\Delta_e \le 2 - w_e|\Delta_e|$ for $e \in \hat{E}$. To show this, we consider cases based on if $f_e \le 1$. If $f_e \le 1$ then $\Delta_e = 0$ so the claim is evident. If $f_e \ge 1$ then $f_e^* = 1$ and $w_e = 1/f_e$, so
\[ -w_e\Delta_e = \frac{f_e-1}{f_e} \le 2 - |1-f_e|/f_e, \] because $f_e \ge 1$. This completes the proof.
\end{proof}

Next, we prove that solving the min-ratio cycle problem $\min_{B^\top \Delta = 0} \frac{\wt{g}^\top \Delta}{\|\wt{W}\Delta\|_1}$ for approximate gradients $\wt{g}$ and $\wt{W}$ is sufficient to decrease the potential.
\begin{lemma}
\label{lemma:progress}
Let $\beta < 1$, and $f$ be a flow with gradient $g$ and weights $w$. Let $\wt{g}$ satisfy that $\|W^{-1}(g - \wt{g})\|_\infty \le \beta/10$, $\wt{w} \approx_2 w$, and circulation $\Delta$ satisfy that $\frac{\wt{g}^\top \Delta}{\|\wt{W}\Delta\|_1} \le -\beta$. Let $\eta$ be such that $\eta \|\wt{W}\Delta\|_1 = \frac{\beta}{100p}$. Then
\[ \Phi(f + \eta\Delta) \le \Phi(f) - \Omega(\beta^2/p). \]
\end{lemma}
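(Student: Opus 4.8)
The plan is to do a standard "IPM progress" computation: take the step $f' = f + \eta\Delta$, Taylor-expand $\Phi(f') - \Phi(f)$ to second order, bound the first-order term by the min-ratio cycle guarantee, and bound the second-order (error) term using the smallness of the step $\eta\|\wt W \Delta\|_1 = \beta/(100p)$. The first-order term is $\eta \g\Phi(f)^\top \Delta = \eta g^\top \Delta$. Since we only assume a bound on $\wt g^\top \Delta/\|\wt W\Delta\|_1$, I would first convert: $g^\top\Delta = \wt g^\top\Delta + (g-\wt g)^\top\Delta$, and bound $|(g-\wt g)^\top\Delta| \le \|W^{-1}(g-\wt g)\|_\infty \|W\Delta\|_1 \le (\beta/10)\|W\Delta\|_1 \le (\beta/5)\|\wt W\Delta\|_1$ using $\wt w\approx_2 w$. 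Combined with $\wt g^\top\Delta \le -\beta\|\wt W\Delta\|_1$, this gives $g^\top\Delta \le -(\beta - \beta/5)\|\wt W\Delta\|_1 = -\tfrac{4\beta}{5}\|\wt W\Delta\|_1$, so the first-order decrease is at least $\tfrac{4\beta}{5}\eta\|\wt W\Delta\|_1 = \tfrac{4\beta}{5}\cdot\tfrac{\beta}{100p} = \Omega(\beta^2/p)$.

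Next I would control the second-order error. Write $\Phi = \Phi_1 + \Phi_E + \Phi_{\hat E}$ where $\Phi_1(f) = 10m\log(\ell^\top f - F^*)$, $\Phi_E(f) = -\sum_{e\in E}\log(f_e+\delta_e)$, and $\Phi_{\hat E}(f) = \sum_{e\in\hat E} V(f_e)$. For each piece I need: the second-order remainder is at most (say) half the first-order gain, i.e. $O(\beta^2/p)$. The key quantitative input is that the step is tiny in the relevant local norm. Concretely, for a coordinate $e$, $\eta|\Delta_e| \le \eta\|\wt W\Delta\|_1/\wt w_e \le 2\eta\|\wt W\Delta\|_1/w_e = \tfrac{\beta}{50p}\cdot w_e^{-1}$, so $w_e\eta|\Delta_e|\le \beta/(50p)$. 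For $\Phi_E$: $w_e = 1/(f_e+\delta_e)$ is exactly the local scale, so $\eta|\Delta_e| \le \tfrac{\beta}{50p}(f_e+\delta_e) \ll f_e+\delta_e$, hence $-\log(f_e+\delta_e+\eta\Delta_e)+\log(f_e+\delta_e)+\tfrac{\eta\Delta_e}{f_e+\delta_e}$ is $O((w_e\eta\Delta_e)^2)$, and summing gives $O(\|\wt W\eta\Delta\|_1 \cdot \max_e w_e\eta|\Delta_e|) = O(\tfrac{\beta}{100p}\cdot\tfrac{\beta}{50p}) = O(\beta^2/p^2)$, which is negligible. For $\Phi_1$: here I need that $\ell^\top f - F^*$ does not shrink too much along the step; the cost $|\ell^\top(\eta\Delta)|$ is controlled because $\ell_e/(f_e+\delta_e) \le$ something — actually I should use that $\g\Phi_1$ contributes $\tfrac{10m}{\ell^\top f - F^*}\ell_e$ to $g_e$ and the total $\|W\eta\Delta\|_1$-mass is $\beta/(100p)$, so $\tfrac{10m}{\ell^\top f - F^*}|\ell^\top\eta\Delta| \le \|W\eta\Delta\|_1 \cdot(\text{ratio of }\g\Phi_1\text{ to }w) $ — more carefully, $\tfrac{10m\ell_e}{\ell^\top f-F^*} \le w_e + |g_e|$, and I can bound $\sum_e \tfrac{10m\ell_e}{\ell^\top f-F^*}\eta|\Delta_e|$ in terms of $\|W\eta\Delta\|_1$ plus a $\|W^{-1}g\|_\infty$-type factor; since the log function's second derivative at $\ell^\top f - F^*$ times $(\ell^\top\eta\Delta)^2$ is at most $\big(\tfrac{10m}{\ell^\top f - F^*}\ell^\top\eta\Delta\big)^2/(10m) = O(1/m)$, this term is also negligible.

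The genuinely delicate piece is $\Phi_{\hat E}$, i.e. the high-power barrier $V(x) = x^{-p}/p$ for $x\le 1$, because $V$ is \emph{not} self-concordant and its second derivative $V''(x) = (p+1)x^{-p-2}$ is not controlled by $(V'(x))^2 = x^{-2p-2}$ when $x$ is small (the ratio is $(p+1)x^p$, which is fine only because... it's $\le p+1$). Here the weight is $w_e = -V'(f_e) = f_e^{-p-1}$, so $w_e\eta|\Delta_e| \le \beta/(50p)$ means $\eta|\Delta_e| \le \tfrac{\beta}{50p}f_e^{p+1} \le \tfrac{\beta}{50p}f_e \le \tfrac{1}{50}f_e$ (for $\beta<1$, $p\ge1$), so the step stays in the region $[f_e/2, 3f_e/2]$, well away from $0$. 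On this interval $V''$ varies by only a constant factor, so the second-order remainder for coordinate $e$ is $O(V''(f_e)(\eta\Delta_e)^2) = O((p+1)f_e^{-p-2}(\eta\Delta_e)^2) = O(p\cdot w_e\eta|\Delta_e|\cdot f_e^{-1}\eta|\Delta_e|)$. Using $\eta|\Delta_e| \le \tfrac{\beta}{50p}f_e^{p+1}\le\tfrac{\beta}{50p}f_e$ once more, $f_e^{-1}\eta|\Delta_e|\le\tfrac{\beta}{50p}$, giving remainder $O(p\cdot w_e\eta|\Delta_e|\cdot\tfrac{\beta}{p}) = O(\beta\cdot w_e\eta|\Delta_e|)$; summing over $e\in\hat E$ yields $O(\beta\|\wt W\eta\Delta\|_1) = O(\beta\cdot\tfrac{\beta}{100p}) = O(\beta^2/p)$. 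By tracking constants one ensures this is at most, say, $\tfrac{1}{2}\cdot\tfrac{4\beta}{5}\cdot\tfrac{\beta}{100p}$, so the net decrease is $\Omega(\beta^2/p)$. The main obstacle, then, is making the second-order bound for the non-self-concordant $V$ rigorous: the crucial observation that rescues everything is that the factor $p$ appearing in the definition $\eta\|\wt W\Delta\|_1 = \beta/(100p)$ is precisely what forces $\eta|\Delta_e|\le O(f_e/p)$, keeping the step inside a region where $V''$ behaves like $p$ times $(V')^2/V'$-scale and the curvature is tame — I would isolate this as the heart of the argument and do the case analysis $f_e\le1$ versus $f_e\ge1$ (where $V$ is just the log barrier and the standard bound applies) explicitly.
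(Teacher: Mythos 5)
Your proposal follows essentially the same route as the paper: convert the approximate min-ratio guarantee into a bound on $g^\top\Delta/\|W\Delta\|_1$, Taylor-expand each piece of $\Phi$ to second order, and observe that the $1/p$ factor built into the step size $\eta\|\wt W\Delta\|_1 = \beta/(100p)$ is exactly what tames the second derivative of the non-self-concordant barrier $V(x)=x^{-p}/p$. The paper organizes this as a single clean inequality $V(f_e+\eta\Delta_e) \le V(f_e)+\eta\nabla V(f_e)\Delta_e + 2p\eta^2 w_e^2\Delta_e^2$, summed to $\eta g^\top\Delta + 2p\eta^2\|W\Delta\|_1^2$, but the content is the same as yours.

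One sentence in your write-up is actively misleading and you should fix it: you assert that the step stays in $[f_e/2, 3f_e/2]$ and that ``on this interval $V''$ varies by only a constant factor.'' That is false when $p\approx\log m$: over $[f_e/2, 3f_e/2]$ the ratio of endpoint values of $V''(x)=(p+1)x^{-p-2}$ is roughly $3^{p+2}=\mathrm{poly}(m)$, not $O(1)$. What actually rescues the argument — and what you correctly invoke one line later — is the much tighter containment $\eta|\Delta_e|\le \tfrac{\beta}{50p}f_e$, i.e.\ the step lies in $[f_e(1-O(1/p)), f_e(1+O(1/p))]$, on which $(1+O(1/p))^{p+2}=O(1)$ so $V''$ really is stable. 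You should state that interval directly rather than the loose $[f_e/2,3f_e/2]$, since the Taylor remainder needs $V''(\xi)=O(V''(f_e))$ for the intermediate point $\xi$. Two smaller points: for the $10m\log(\ell^\top f - F^*)$ term the paper simply uses concavity of $\log$ so no second-order estimate is needed at all, which is cleaner than your argument; and in the $f_e\ge 1$ case the barrier is only the log barrier for $f_e\ge 1$ but the step can cross below $1$ into the $x^{-p}$ regime, so a short boundary argument (the paper shows $\nabla^2 V(x)\le 2p$ for $x\ge 1-1/(10p)$) is needed rather than ``the standard bound applies.''
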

\begin{proof}
We first establish that $\frac{g^\top\Delta}{\|W\Delta\|_1} \le -\beta/3$. We may assume by scaling that $\|\wt{W}\Delta\|_1 = 1$. Then
\[ g^\top \Delta = \wt{g}^\top \Delta + (g-\wt{g})^\top \Delta \le -\beta + \|\wt{W}^{-1}(g-\wt{g})\|_\infty \|\wt{W}\Delta\|_1 \le -\beta + \beta/5 \le -4\beta/5. \]
The bound follows by combining this with $\|W\Delta\|_1 \le 2$.

We consider how adding $\eta\Delta$ to $f$ affects each term of $\Phi$. By concavity of $\log$, we get that
\[ 10m \log(\ell^\top(f+\eta\Delta) - F^*) \le 10m \log(\ell^\top f - F^*) + \eta \frac{10m}{\ell^\top f - F^*} \ell^\top \Delta. \]
Next consider $e \in E$, where $w_e = \frac{1}{f_e+\delta_e}$. Because $w_e|\eta\Delta_e| \le 1/100$ for all $e$, we get that
\begin{align*}
    -\log(f_e+\delta_e+\eta\Delta_e) \le -\log(f_e+\delta_e) - \eta w_e\Delta_e + \eta^2w_e^2\Delta_e^2.
\end{align*}
For $e \in \hat{E}$, we will check that
\begin{equation}
V(f_e+\eta\Delta_e) \le V(f_e) + \eta \g V(f_e) \Delta_e + 2p \eta^2 w_e^2 \Delta_e^2. \label{eq:want2nd}
\end{equation}
To prove this, we split into cases based on $f_e$. We first consider $f_e \ge 1$. Then $w_e = 1/f_e$ and thus $\eta|\Delta_e|/f_e \le \frac{1}{100p}$. Note that for all $x \in [f_e - \eta\Delta_e, f_e + \eta\Delta_e]$ that $\g^2 V(x) \le 3p/f_e$. Indeed, this is clear if $f_e \ge 1.1$, and if $f_e \le 1.1$ then for all $1 > x \ge (1-1/(10p))$ we know that $\g^2 V(x) = (p+1) x^{-p-2} \le 2p$. Thus \eqref{eq:want2nd} follows by Taylor's theorem.

For the case $f_e \le 1$, we prove the stronger bound
\[ V(f_e+\eta\Delta_e) \le V(f_e) + \eta \g V(f_e) \Delta_e + 2p \eta^2 f_e^{-(p+2)} \Delta_e^2. \]
This is stronger because $f_e^{-(p+2)} \le f_e^{-2p-2} = w_e^2$. Again, it suffices to check that $\g^2 V(x) \le 4p f_e^{-(p+2)}$ for $x \in [f_e - \eta \Delta_e, f_e + \eta \Delta_e]$, where we know that
$\eta\Delta_e \le \frac{1}{100p} f_e^p \le \frac{f_e}{100p}$ because $w_e|\Delta_e| \le \frac{1}{100p}$.
This is done similarly to above.

Summing these bounds gives us
\begin{align*}
    \Phi(f + \eta\Delta) - \Phi(f) &\le \eta g^\top \Delta + 2p\eta^2 \sum_{e \in E \cup \hat{E}} w_e^2 \Delta_e^2 \le \eta(g^\top \Delta + 2p\eta \|W\Delta\|_1^2) \\
    &\le \eta\left(\frac{-\beta}{3} \cdot \|W\Delta\|_1 + \frac{\beta}{25} \|W\Delta\|_1 \right) \le -\Omega(\beta^2/p),
\end{align*}
where we have used that $2p\eta \|W\Delta\|_1 \le 4p\eta \|\wt{W}\Delta\|_1 \le \beta/25$.
\end{proof}
Finally we prove that if the true min-ratio cycle problem has a good value, then the approximate version also has a good value.
\begin{lemma}
\label{lemma:approx}
Let $g, w$ be vectors such that there is a circulation $\Delta$ with $\frac{g^\top \Delta}{\|W\Delta\|_1} \le -\beta$. Let $\wt{w}$ satisfy that $\wt{w} \approx_2 w$, and $\wt{g}$ satisfy that $\|W^{-1}(\wt{g} - g)\|_\infty \le \beta/3$. Then $\frac{\wt{g}^\top \Delta}{\|\wt{W}\Delta\|_1} \le -\beta/3$.
\end{lemma}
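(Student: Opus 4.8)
The plan is a direct Hölder-type estimate combined with the two-sided bound on the weight distortion; there is essentially nothing subtle here, since the hypotheses are exactly calibrated to the triangle-inequality loss. First I would normalize: the quantity $\frac{\wt{g}^\top \Delta}{\|\wt{W}\Delta\|_1}$ is scale-invariant in $\Delta$, so I may assume $\|W\Delta\|_1 = 1$, and then the hypothesis on $\Delta$ reads $g^\top \Delta \le -\beta$.

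Next I would control the numerator by writing $\wt{g}^\top \Delta = g^\top \Delta + (\wt{g}-g)^\top \Delta$ and bounding the error term via the pairing between the $\ell_\infty$ and $\ell_1$ norms after inserting $W$:
\[ |(\wt{g}-g)^\top \Delta| = |(W^{-1}(\wt g - g))^\top (W\Delta)| \le \|W^{-1}(\wt g - g)\|_\infty \, \|W\Delta\|_1 \le \beta/3. \]
Hence $\wt{g}^\top \Delta \le -\beta + \beta/3 = -2\beta/3$, in particular it is negative.

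For the denominator I would use $\wt w \approx_2 w$ entrywise, which gives $|\wt w_e \Delta_e| \le 2 |w_e \Delta_e|$ for every $e$ and therefore $\|\wt W \Delta\|_1 \le 2\|W\Delta\|_1 = 2$. Combining, since the numerator is negative, $\frac{\wt{g}^\top \Delta}{\|\wt{W}\Delta\|_1} \le \frac{-2\beta/3}{2} = -\beta/3$, which is the claim; undoing the normalization is immediate by scale-invariance. The only point one must be slightly careful about is the direction of the inequality when dividing: because the numerator is strictly negative, enlarging the (positive) denominator only makes the ratio larger, so the upper bound $-2\beta/3$ on the numerator together with the upper bound $2$ on the denominator does yield an upper bound of $-\beta/3$ on the ratio. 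I do not anticipate any real obstacle; the statement is a routine robustness lemma used to pass from the exact min-ratio-cycle quantity to the approximate one maintained by the data structure of \cref{thm:mrc}.
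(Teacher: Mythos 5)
Your proposal is correct and matches the paper's argument essentially verbatim: normalize so $\|W\Delta\|_1 = 1$, split $\wt g^\top\Delta = g^\top\Delta + (\wt g - g)^\top\Delta$, apply the $\ell_\infty$--$\ell_1$ Hölder pairing to get $\wt g^\top\Delta \le -2\beta/3$, and combine with $\|\wt W\Delta\|_1 \le 2$. The only difference is that you spell out the sign observation needed for the final division, which the paper leaves implicit.
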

\begin{proof}
By normalizing, we may assume that $\|W\Delta\|_1 = 1$, and thus $\|\wt{W}\Delta_1\|_1 \le 2$. Then the conclusion follows from
\[ \wt{g}^\top \Delta \le g^\top \Delta + (\wt{g} - g)^\top \Delta \le -\beta + \|W^{-1}(\wt{g}-g)\|_\infty \|W\Delta\|_1 \le -2\beta/3. \qedhere \]
\end{proof}

\subsection{Duality}
\label{sec:duality}

When there is no min-ratio cycle to make progress, we get that there is a potential vector $\phi$ such that the gradient $g$ is close to $B\phi$ in the proper norm. Later, we will use this to read off edges $e \in \hat{E}$ where the flow value changes significantly, and thus may signal a vertex whose length has gone down by a $(1-\eps)$.
\begin{lemma}
\label{lemma:dual}
Let $G$ be a graph and consider vectors $g \in \R^E$, $w \in \R^E_{>0}$, and $W = \diag(w)$. If there is no circulation $\Delta \in \R^E$ with $\frac{|g^\top \Delta|}{\|W\Delta\|_1} \le -\beta$, then there is $\phi \in \R^V$ such that $\|W^{-1}(B\phi + g)\|_\infty \le \beta$.
\end{lemma}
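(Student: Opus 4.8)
This is a linear-programming duality (or equivalently minimax) statement, so the plan is to set up the min-ratio cycle problem as a linear program, take its dual, and read off the potential $\phi$ from the dual optimal solution. First I would rewrite the quantity $\min \frac{g^\top \Delta}{\|W\Delta\|_1}$ over circulations. By homogeneity, the condition ``there is no circulation $\Delta$ with $\frac{g^\top \Delta}{\|W\Delta\|_1} \le -\beta$'' is equivalent to: for every circulation $\Delta$, $g^\top \Delta \ge -\beta \|W\Delta\|_1$, i.e. $g^\top \Delta + \beta \|W\Delta\|_1 \ge 0$. Equivalently, the optimum of $\min\{ g^\top \Delta : B^\top \Delta = 0,\ \|W\Delta\|_1 \le 1\}$ is at least $-\beta$. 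I would introduce the standard splitting $\Delta = \Delta^+ - \Delta^-$ with $\Delta^\pm \ge 0$, so that $\|W\Delta\|_1 \le 1$ becomes the linear constraint $\sum_e w_e(\Delta^+_e + \Delta^-_e) \le 1$, turning the whole thing into a genuine LP in the variables $(\Delta^+, \Delta^-)$ with constraints $B^\top(\Delta^+ - \Delta^-) = 0$ and $w^\top(\Delta^+ + \Delta^-) \le 1$.

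Next I would write the LP dual. The primal has one free dual variable per vertex (call the vector $\phi \in \R^V$, dual to $B^\top \Delta = 0$) and one nonnegative dual variable $\lambda \ge 0$ (dual to the norm-budget constraint). Working out the dual constraints coming from the $\Delta^+_e$ and $\Delta^-_e$ columns gives, for each edge $e = (u,v)$, $-(\phi(v) - \phi(u)) + \lambda w_e \ge g_e$ and $(\phi(v)-\phi(u)) + \lambda w_e \ge -g_e$; writing $(B\phi)_e = \phi(v) - \phi(u)$ these two combine to $|g_e + (B\phi)_e| \le \lambda w_e$ for all $e$, i.e. $\|W^{-1}(B\phi + g)\|_\infty \le \lambda$. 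The dual objective is to maximize $-\lambda$, and strong LP duality (the primal is feasible, e.g. $\Delta=0$, and bounded below by hypothesis, so there is no duality gap) says this equals the primal optimum, which is $\ge -\beta$. Hence the dual optimum $-\lambda^* \ge -\beta$, i.e. $\lambda^* \le \beta$, and the optimal $\phi^*$ satisfies $\|W^{-1}(B\phi^* + g)\|_\infty \le \lambda^* \le \beta$, which is exactly the claim.

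The main thing to be careful about — and the only real obstacle — is the sign conventions and the orientation of $B$ relative to the statement, since the lemma as stated has a minor typo ($\frac{|g^\top\Delta|}{\|W\Delta\|_1} \le -\beta$ is never satisfiable; the intended hypothesis is $\frac{g^\top\Delta}{\|W\Delta\|_1} \le -\beta$ for some circulation, or its negation). Once that is pinned down, the derivation is the textbook ``$\ell_1/\ell_\infty$ duality for min-ratio problems'' and the dual variable $\lambda$ absorbing the norm budget is what produces the scale-free bound $\|W^{-1}(B\phi+g)\|_\infty \le \beta$. An alternative, slicker route that avoids writing the dual explicitly is a separating-hyperplane / Farkas argument: the hypothesis says the origin is not in the interior relative to the constraints, so the convex set $\{(B^\top \Delta, g^\top \Delta + \beta\|W\Delta\|_1) : \Delta \in \R^E\}$ misses an appropriate halfspace, and the separating functional gives $(\phi, 1)$ with the desired inequality. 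I would go with the explicit LP dual since it is the most transparent and the paper's later use just needs the stated inequality, not any structural property of $\phi$.
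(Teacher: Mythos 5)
Your proof is correct and rests on the same idea as the paper's: strong LP duality for the min-ratio cycle problem, with $\phi$ emerging as the multiplier for the circulation constraint $B^\top\Delta = 0$ and $\lambda$ absorbing the norm budget, so the dual constraint reads $\|W^{-1}(B\phi+g)\|_\infty \le \lambda$ and the optimal $\lambda$ is at most $\beta$. The paper organizes the duality slightly differently — it normalizes by fixing $g^\top\Delta = -1$ and minimizing $\|W\Delta\|_1$, replaces $\|W\Delta\|_1$ by $\max_{\|z\|_\infty\le 1} z^\top W\Delta$, and then swaps the order of min and max — whereas you normalize $\|W\Delta\|_1 \le 1$, split $\Delta = \Delta^+-\Delta^-$, and take the explicit LP dual; these are mechanically different bookkeepings of the same $\ell_1/\ell_\infty$ duality, and both give the statement. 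You also correctly flag that the absolute value $|g^\top\Delta|$ in the lemma as written is a typo (the hypothesis would be vacuous); the intended condition is $g^\top\Delta/\|W\Delta\|_1 \le -\beta$, and your proof proceeds from the corrected version.
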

\begin{proof}
By strong duality, we know that
\[
1/\beta \le \min_{\substack{B^\top \Delta = 0 \\ g^\top \Delta = -1}} \|W\Delta\|_1 = \min_{\substack{B^\top \Delta = 0 \\ g^\top \Delta = -1}} \max_{\|z\|_\infty \le 1} z^\top W\Delta = \max_{\|z\|_\infty \le 1} \min_{\substack{B^\top \Delta = 0 \\ g^\top \Delta = -1}} z^\top W\Delta. \]
For this to not be undefined, we need for $Wz = B\phi + \lambda g$ for some $\lambda$: thus we get
\[ 1/\beta \le \max_{\|W^{-1}(B\phi + \lambda g)\|_\infty \le 1} \lambda. \]
Scaling by $\lambda$ completes the proof.
\end{proof}

We will now prove that if $\phi$ is a dual vector from \cref{lemma:dual} where $\phi(s) = 0$, then scaling $\phi$ produces a vector $\hphi$ satisfying (1) $\hphi$ is a valid dual vector in that $\hphi(v) - \hphi(u) \le \ell_e$ for any $e = (u, v)$, and thus $\hphi(v) \le d_G(s, v)$ for all $v$, and (2) $\sum_{v \in V} \hphi(v) \ge (1-O(\alpha))F$, i.e., the $\hphi$ are large lower bounds on average.

\begin{lemma}
\label{lemma:phiprop}
Let flow $f$ induce gradient $g$ and weights $w$ (as in \cref{def:mrc}), and assume that
\[ (1-\alpha/2)F \le \ell^\top f \le (1+5\alpha)F, \] and $\Phi(f) \le 20m \log m$. Let $\phi \in \R^V$ satisfy that $\phi(s) = 0$ and $\|W^{-1}(B\phi + g)\|_{\infty} \le \beta$ for $\beta \le 1/2$. Then for $\hphi := \frac{\ell^\top f - F^*}{10m}\phi$ it holds that
\begin{equation}
    \frac12 \frac{\ell^\top f - F^*}{10m}w_e \le \ell_e - (\hphi(v) - \hphi(u)) \le 2\frac{\ell^\top f - F^*}{10m}w_e \enspace \text{ for } \enspace e = (u, v) \in E \cup \hat{E}, \label{eq:phivalid}
\end{equation}
and $\hphi(v) \le d_G(s, v)$ for all $v \in V$, and $\sum_{v \in V} \hphi(v) \ge (1-100\alpha p \log m)F$.
\end{lemma}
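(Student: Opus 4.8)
The plan is to establish the three conclusions in the order they are stated, since \eqref{eq:phivalid} immediately gives the second claim and also sets up the third. Write $c := \frac{\ell^\top f - F^*}{10m}$, so that $\hphi = c\phi$ and, by the gradient formula following \cref{def:mrc}, $g_e = \frac1c\ell_e - w_e$. The hypothesis $\|W^{-1}(B\phi + g)\|_\infty \le \beta$ says $\big|(B\phi)_e + \frac1c\ell_e - w_e\big| \le \beta w_e$ for every $e = (u,v)$, i.e.\ $(1-\beta)w_e \le \frac1c\ell_e - (\phi(v) - \phi(u)) \le (1+\beta)w_e$; multiplying by $c$ and using $\beta \le 1/2$ yields \eqref{eq:phivalid}. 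Since $\ell^\top f \ge (1-\alpha/2)F > F^*$ forces $c > 0$, and $w_e > 0$ for every $e \in E \cup \hat E$, \eqref{eq:phivalid} gives $\hphi(v) - \hphi(u) < \ell_e$ for all edges; combined with $\hphi(s) = c\phi(s) = 0$ and summing this inequality along a shortest $s$-$v$ path of $G$, we get $\hphi(v) \le d_G(s,v)$.

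For the averaged bound, the starting point is the identity $\sum_{v\in V}\phi(v) = -\phi^\top d = -(B\phi)^\top f$, which uses $\phi(s) = 0$ and that $f$ routes $d = (n-1, -1, \dots, -1)$. Setting $r_e := \frac1c\big(\ell_e - (\hphi(v) - \hphi(u))\big)$, \eqref{eq:phivalid} says $r_e \in [(1-\beta)w_e, (1+\beta)w_e] \subseteq (0, 2w_e]$, and unwinding its definition gives $(B\phi)_e = r_e - \frac1c\ell_e$, so
\[ \sum_{v\in V}\hphi(v) = c\sum_{v\in V}\phi(v) = -c\,(B\phi)^\top f = \ell^\top f - c\sum_{e \in E \cup \hat E} r_e f_e. \]
Since $\ell^\top f \ge (1-\alpha/2)F$ and $c \le \frac{(1+5\alpha)F - (1-\alpha)F}{10m} = \frac{6\alpha F}{10m}$, it suffices to prove $\sum_e r_e f_e \le O(pm\log m)$; the claimed bound then follows with room to spare for the constant $100$.

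To bound $\sum_e r_e f_e \le 2\sum_e \max(w_e f_e, 0)$, I split by edge type. For $e \in E$, $w_e f_e = 1 - w_e\delta_e \le 1$, contributing at most $m$. For $e \in \hat E$ with $f_e \ge 1$, $w_e = 1/f_e$ so $w_e f_e = 1$, contributing at most $|\hat E| = n - 1$. The crux is $\sum_{e \in \hat E,\, f_e < 1} w_e f_e = \sum_{e \in \hat E,\, f_e < 1} f_e^{-p} = p\sum_{e \in \hat E,\, f_e < 1} V(f_e)$, using $V(x) = x^{-p}/p$ on $(0,1]$. I then bound $\sum_{e\in\hat E,\,f_e<1}V(f_e) \le \sum_{e\in\hat E}V(f_e) - \sum_{e\in\hat E,\,f_e\ge1}V(f_e)$: the subtracted sum is $\ge -O(m\log m)$ because $V(f_e) = 1/p - \log f_e \ge -\log f_e$ and $f_e \le \mathrm{poly}(m)$ (as $\ell_e f_e \le \ell^\top f + o(1) \le \mathrm{poly}(m)$, all other $\ell_{e'}f_{e'}$ being nonnegative up to the negligible $\sum_{e'\in E}\ell_{e'}\delta_{e'}$, while $\ell_e = d_v \ge 1$); and the first sum is controlled by the potential, $\sum_{e\in\hat E}V(f_e) = \Phi(f) - 10m\log(\ell^\top f - F^*) + \sum_{e\in E}\log(f_e+\delta_e) \le 20m\log m + O(mp) + O(m\log m)$, using $\Phi(f) \le 20m\log m$, the lower bound $\ell^\top f - F^* \ge \tfrac\alpha2 F \ge \tfrac\alpha2$ (so $-\log(\ell^\top f - F^*) \le O(p)$ since $p = \Theta(\log(1/\alpha))$), and $f_e+\delta_e \le \mathrm{poly}(m)$. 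Assembling these (and using $p \le \log m$, which holds in the regime $\alpha = m^{-o(1)}$) gives $\sum_e r_e f_e \le O(pm\log m)$, completing the proof.

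I expect the third paragraph to contain essentially all the work; the first two are routine dualization and bookkeeping. The key point there is exactly the one driving the whole modified IPM: the high-power barrier value $V(f_e)$ is, up to the factor $p$, precisely the weight-times-flow quantity $\tfrac1p f_e^{-p} = \tfrac1p w_e f_e$ that appears on $\hat E$, so the a priori potential bound $\Phi(f) \le 20m\log m$ directly caps $\sum_{e\in\hat E} w_e f_e$ — and hence the whole error term $c\sum_e r_e f_e$ — by $O(\alpha p F\log m)$. The auxiliary terms of $\Phi$ (the $\log(\ell^\top f - F^*)$ term, the $E$-barrier, and the $f_e \ge 1$ branch of $V$) are tamed by the two-sided bound on $\ell^\top f$ and the polynomial bound on individual flow values, and I do not anticipate any difficulty there.
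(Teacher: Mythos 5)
Your proposal is correct and follows essentially the same route as the paper: derive \eqref{eq:phivalid} by unwinding $\|W^{-1}(B\phi+g)\|_\infty\le\beta$, deduce $\hphi(v)\le d_G(s,v)$ from the resulting dual feasibility, and then expand $\sum_v\hphi(v)=(B\hphi)^\top f=\ell^\top f-\text{(error)}$ with the error controlled by $c\sum_e w_e f_e$, which on $\hat E$ ties back to the barrier value and hence to the a priori bound $\Phi(f)\le 20m\log m$. Two small remarks. First, there is a sign slip in your paragraph two: with the edge-vertex incidence convention $(B\phi)_e=\phi(v)-\phi(u)$ for $e=(u,v)$, one has $(B\phi)_e=\tfrac1c\ell_e-r_e$ (not $r_e-\tfrac1c\ell_e$); your final identity $\sum_v\hphi(v)=\ell^\top f-c\sum_e r_ef_e$ is right provided you use the demand $d=(-(n-1),1,\dots,1)$, which is what the paper uses in this lemma (the paper is internally inconsistent about the sign of $d$ between Lemma~3.2 and Section~3.5, so the confusion is forgivable). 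Second, your explicit split of $\hat E$ into $f_e<1$ and $f_e\ge1$, with the $\ge1$ terms controlled via $V(f_e)=1/p-\log f_e$ and a polynomial upper bound on $f_e$, is actually a touch more careful than the paper's write-up, which asserts $\tfrac1p\sum_{e\in\hat E}f_e^{-p}=\sum_{e\in\hat E}V(f_e)$ as an equality even though this only holds when every $f_e\le1$; the conclusion is unaffected since the $f_e\ge1$ contributions to $\tfrac1p\sum f_e^{-p}$ are at most $m/p$, but your treatment makes this explicit.
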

\begin{proof}
We start by establishing \eqref{eq:phivalid}. Indeed, for an edge $e = (u, v)$, $\|W^{-1}(B\phi + g)\|_\infty \le \beta$ tells us
\[ -\beta w_e \le \frac{10m}{\ell^\top f - F^*} \ell_e - w_e - (\phi(v) - \phi(u)) \le \beta w_e. \]
For $\beta \le 1/2$, rearranging yields \eqref{eq:phivalid}. In particular, $\hphi(v) - \hphi(u) \le \ell_e$ for all edges $e$, and thus if $\hphi(s) = 0$ then $\hphi(v) \le d_G(s, v)$ for all $v \in V$.
For the final point, note that
\begin{align*}
    \sum_{v \in V} \hphi(v) &= d^\top \hphi = (B\hphi)^\top f \ge \ell^\top f + (B\hphi - \ell)^\top f \\
    &\ge (1-\alpha/2)F - \sum_{e \in E \cup \hat{E}} 2\frac{\ell^\top f - F^*}{10m}w_ef_e \\
    &\ge (1-\alpha/2)F - \frac{6\alpha F}{10m} \left(\sum_{e \in E} \frac{f_e}{f_e+\delta_e} + \sum_{e \in \hat{E}} 1 + \frac{1}{f_e^p} \right) \\
    &\ge (1-\alpha/2)F - \frac{6\alpha F}{10m}\left(m + \sum_{e \in \hat{E}} \frac{1}{f_e^p} \right).
\end{align*}
The result follows because
\[ \frac{1}{p}\sum_{e \in \hat{E}} \frac{1}{f_e^p} = \Phi(f) - 10m \log(\ell^\top f - F^*) + \sum_{e \in E} \log(f_e + \delta_e) \le 100m \log m \]
because $f_e \le m^{10}$ for all $e \in E$, and the hypothesis $\Phi(f) \le 20m\log m$.
\end{proof}

\begin{remark}
Our algorithm does not actually require the final claim that \[ \sum_{v \in V} \hphi(v) \ge (1-100\alpha p \log m)F, \]
i.e., that $\hphi$ are potentials which are not much smaller than the true distances on average. However, we believe that including a proof of this statement is important for intuition and may be useful for future works.
\end{remark}

\subsection{Stability}
\label{sec:stability}

In this section we show various forms of stability. This includes the standard properties that the gradient and weights are slowly changing (they only need to be updated very infrequently). A key insight in our analysis is that for the edges $e \in \hat{E}$, the $p$-th power barrier $x^{-p}$ makes it extremely costly to move $x$ away from $1$ at all, and this allows us to improve the recourse (i.e., sum of size of $|S^{(t)}|$) by a factor of $\alpha$ in \cref{thm:detect}.

\begin{lemma}
\label{lemma:wtgw}
Let $f$ be a flow with gradient $g$ and weights $w$ (as in \cref{def:mrc}). Let $r \approx_{1+\beta/100} \ell^\top f - F^*$. If $|\wt{f}_e - f_e| \le \frac{\beta}{100p} (f_e+\delta_e)$ for all edges $e$ ($\delta_e = 0$ for $e \in \hat{E}$), then $w(\wt{f}) \approx_{1+\beta/2} w(f)$. Additionally, for $\wt{g} = \frac{10m}{r}\ell - w(\wt{f})$
it holds that
\[ \left\|W^{-1}\Big(\wt{g} - \frac{\ell^\top f - F^*}{r}g\Big)\right\|_\infty \le \beta. \]
\end{lemma}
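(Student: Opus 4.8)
The plan is to reduce both conclusions to a single entrywise estimate, namely that for every edge $e$,
\[ (1 - \beta/10)\, w(f)_e \;\le\; w(\wt f)_e \;\le\; (1 + \beta/10)\, w(f)_e, \]
after which the rest is bookkeeping (throughout I assume $\beta \le 1$, as in the other lemmas of this section). Granting this estimate, the first claim $w(\wt f) \approx_{1+\beta/2} w(f)$ is immediate, since $[1-\beta/10,\,1+\beta/10] \subseteq [(1+\beta/2)^{-1},\, 1+\beta/2]$ for $\beta \le 1$. For the second claim I would exploit the cancellation of the cost term: writing $c := \ell^\top f - F^*$, we have $g = \frac{10m}{c}\ell - w(f)$ and $\wt g = \frac{10m}{r}\ell - w(\wt f)$, hence
\[ \wt g - \tfrac{c}{r}\, g \;=\; \Big(\tfrac{10m}{r}\,\ell - w(\wt f)\Big) - \Big(\tfrac{10m}{r}\,\ell - \tfrac{c}{r}\, w(f)\Big) \;=\; \tfrac{c}{r}\, w(f) - w(\wt f). \]
Applying $W^{-1}$ (with $W = \diag(w(f))$) turns the $e$-th coordinate into $\tfrac{c}{r} - \tfrac{w(\wt f)_e}{w(f)_e}$; since $r \approx_{1+\beta/100} c$ gives $|c/r - 1| \le \beta/100$, the triangle inequality combined with the entrywise estimate bounds this by $\beta/100 + \beta/10 < \beta$ for every $e$, which is exactly the claimed $\ell_\infty$ bound.

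So the content is the entrywise estimate. For $e \in E$ it is direct: $w(f)_e = 1/(f_e+\delta_e)$, and the hypothesis gives $\wt f_e + \delta_e \in (1 \pm \tfrac{\beta}{100p})(f_e+\delta_e)$, so $w(\wt f)_e / w(f)_e = (f_e+\delta_e)/(\wt f_e+\delta_e) \in [\,(1+\tfrac{\beta}{100p})^{-1},\, (1-\tfrac{\beta}{100p})^{-1}\,] \subseteq [1-\beta/10,\,1+\beta/10]$. For $e \in \hat E$ the plan is to use the reformulation $w(f)_e = -\nabla V(f_e) = \max\big(f_e^{-p-1},\, f_e^{-1}\big)$, valid because the curves $x^{-p-1}$ and $x^{-1}$ agree at $x=1$ and, on each of $(0,1]$ and $[1,\infty)$, the larger of the two coincides with the active piece of $-\nabla V$ (this also re-derives that $V$ is $C^1$). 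Writing $\rho := \wt f_e / f_e$, the hypothesis (with $\delta_e = 0$ on $\hat E$) gives $\rho \in [1 - \tfrac{\beta}{100p},\, 1 + \tfrac{\beta}{100p}]$, so $\wt f_e^{-p-1} = \rho^{-p-1} f_e^{-p-1}$ and $\wt f_e^{-1} = \rho^{-1} f_e^{-1}$; using $p+1 \le 2p$ (as $p \ge 1$ is a positive integer) together with the elementary bounds $1+u \le e^u$ and $(1-u)^{-1} \le e^{2u}$ for $0 \le u \le 1/2$, both multipliers $\rho^{-p-1}$ and $\rho^{-1}$ lie in $[1-\beta/10,\,1+\beta/10]$. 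Finally, scaling each of the two arguments of a $\max$ by a factor from the common range $[1-\beta/10,\,1+\beta/10]$ scales the $\max$ by a factor in the same range, so $w(\wt f)_e = \max(\wt f_e^{-p-1}, \wt f_e^{-1})$ lies in $[1-\beta/10,\,1+\beta/10]\cdot w(f)_e$, completing the estimate.

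I do not anticipate a genuine obstacle here. The only subtle point is the edges $e \in \hat E$ for which $f_e$ or $\wt f_e$ sits near the breakpoint $x = 1$, where $V$ changes formula and is not twice differentiable — the $\max$-reformulation of $-\nabla V$ dispatches this cleanly and, as a bonus, removes any need to split into cases according to which side of $1$ each of $f_e, \wt f_e$ falls on. Beyond that, the only arithmetic demanding care is checking that a factor-$(1\pm\tfrac{\beta}{100p})$ perturbation of $f_e$, after being raised to the $(p+1)$-st power, still stays inside $[1-\beta/10,\,1+\beta/10]$ — this is precisely where $p+1 \le 2p$ and $\beta \le 1$ are used — and then both conclusions drop out of the algebra above.
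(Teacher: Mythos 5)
Your proof is correct and takes essentially the same route as the paper: exploit the cancellation of the $\ell$ terms to reduce the $\ell_\infty$ bound to the entrywise stability of $w$, plus the $r$ approximation. The paper simply asserts $w(\wt f)\approx_{1+\beta/2}w(f)$ ``from the definition,'' whereas you spell it out (tighter, to within $1\pm\beta/10$), with the $\max(x^{-p-1},x^{-1})$ reformulation of $-\nabla V$ as a clean way to avoid case analysis at $x=1$.
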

\begin{proof}
The claim $w(\wt{f}) \approx_{1+\beta/2} w(f)$ follows from the definition of $w$ (see \cref{def:mrc}). Recall that $g = \frac{10m}{\ell^\top f - F^*}\ell - w$. Using this gives
\begin{align*}
    \left|\wt{g}_e - \frac{\ell^\top f - F^*}{r_e}g_e\right| &= \left|w(\wt{f})_e - \frac{\ell^\top f - F^*}{r_e} w_e\right| \le \beta|w_e|,
\end{align*}
where we have used that $r \approx_{1+\beta/100} \ell^\top f - F^*$ and that $w(\wt{f}) \approx_{1+\beta/2} w(f)$.
\end{proof}

Now we come to a critical bound in our analysis: the number of edges $e \in \hat{E}$ whose weight can go above a given threshold $K$. This is critical for our analysis, and uses the structure of the barrier on edges $e \in \hat{E}$.

\begin{lemma}
\label{lemma:hechange}
Let $f^{(0)}, \dots, f^{(T)}$ be a sequence of flows with weights $w^{(t)} = w(f^{(t)})$ defined as in \cref{def:mrc}, $W^{(t)} = \diag(w^{(t)})$, and for $0 \le t \le T$ define $\Delta^{(t)} = f^{(t+1)} - f^{(t)}$ where $\|(W^{(t)}) \Delta^{(t)}\|_1 \le \frac{1}{100p}$. Let $f^{(0)}_e = 1$ for $e \in \hat{E}$. For $K > 4$, let $F \subseteq \hat{E}$ consist of edges $e$ such that $w^{(t)}_e \ge K$ for some $0 \le t \le T$. Then $|F| \le O(K^{-\frac{p}{p+1}} T)$.
\end{lemma}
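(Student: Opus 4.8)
The plan is to set up a potential (or ``budget'') argument keyed to the edges of $\hat E$ and to exploit the fact that the $p$-th power barrier makes it expensive — in the $\|W\Delta\|_1$ metric — to push $f_e$ down from $1$ to the level at which $w_e = f_e^{-p-1}$ reaches $K$. First I would observe that for $e\in\hat E$ we have $f^{(0)}_e = 1$ and $w^{(t)}_e = -\g V(f^{(t)}_e)$; when $f^{(t)}_e \le 1$ this is exactly $(f^{(t)}_e)^{-p-1}$, so $w^{(t)}_e \ge K > 4$ forces $f^{(t)}_e \le K^{-1/(p+1)} =: x_K < 1$. Thus for every $e\in F$, the coordinate $f_e$ must travel from $1$ down to $x_K$ at some point. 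Since each step satisfies $\|W^{(t)}\Delta^{(t)}\|_1 \le \frac1{100p}$, we have in particular $w^{(t)}_e |\Delta^{(t)}_e| \le \frac1{100p}$ for each edge, i.e. the multiplicative change in $f_e$ per step is tiny, so there is no ``jump'' issue: the trajectory $f^{(0)}_e, f^{(1)}_e, \dots$ genuinely passes (approximately) through every value between $1$ and $x_K$.

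The core of the argument is a charging/substitution: the ``$\|W\Delta\|_1$-cost'' a single edge $e$ must pay to move from $f_e = 1$ to $f_e = x_K$ is $\int_{x_K}^{1} w(x)\,dx$ where $w(x) = x^{-p-1}$ for $x\le 1$. This integral is $\frac1p(x_K^{-p} - 1) = \frac1p\big(K^{p/(p+1)} - 1\big) = \Omega\big(\tfrac1p K^{p/(p+1)}\big)$. More carefully, since per step $w^{(t)}_e|\Delta^{(t)}_e| \le \frac1{100p}$, the weight changes only by a $1\pm O(1/(100p))\cdot\text{(something)}$ factor per step (as in the Taylor estimates of Lemma~\ref{lemma:progress}), so the discrete sum $\sum_t w^{(t)}_e |\Delta^{(t)}_e|$ over the steps during which $f_e$ descends from $1$ to $x_K$ is within a constant of the integral, hence $\gtrsim \frac1p K^{p/(p+1)}$. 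Summing this lower bound over all $e\in F$: since the $\hat E$-edges are disjoint coordinates, $\sum_{e\in F}\sum_{t} w^{(t)}_e|\Delta^{(t)}_e| \le \sum_{t=0}^{T}\|W^{(t)}\Delta^{(t)}\|_1 \le (T+1)\cdot\frac1{100p}$. Combining, $|F|\cdot \Omega\big(\tfrac1p K^{p/(p+1)}\big) \le O(T/p)$, which rearranges to $|F| \le O(K^{-p/(p+1)} T)$, as claimed.

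There are a couple of points I would need to be careful about. First, I want $K^{p/(p+1)} \ge 2$ (say) so that the ``$-1$'' in $\frac1p(x_K^{-p}-1)$ does not kill the bound — this is exactly what the hypothesis $K > 4$ buys, since $4^{p/(p+1)} \ge 4^{1/2} = 2$. Second, I should make sure $x_K$ is not so small that the per-step relative change of $f_e$ fails to control the integral–vs–sum discrepancy; but $w^{(t)}_e|\Delta^{(t)}_e|\le\frac1{100p}$ with $w = f^{-p-1}$ gives $|\Delta^{(t)}_e|/f^{(t)}_e \le \frac1{100p}(f^{(t)}_e)^p \le \frac1{100p}$, so the relative change per step is at most $\frac1{100p}$ uniformly along the descent, and a standard telescoping/comparison argument (the kind used to pass between $\sum w_e|\Delta_e|$ and $\int w\,dx$) gives the discrete sum is $\ge \frac12\int_{x_K}^1 x^{-p-1}\,dx$. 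Third, an edge in $F$ might oscillate — go below $x_K$, come back up, go down again — but that only \emph{increases} $\sum_t w^{(t)}_e|\Delta^{(t)}_e|$, so the lower bound $\gtrsim\frac1p K^{p/(p+1)}$ for each $e\in F$ still holds from the first descent alone.

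The step I expect to be the main (if still routine) obstacle is the careful comparison between the discrete budget $\sum_t w^{(t)}_e|\Delta^{(t)}_e|$ and the continuous integral $\int_{x_K}^1 x^{-p-1}dx$, i.e. verifying that the small per-step relative movement genuinely prevents the discrete sum from being much smaller than the integral; everything else is bookkeeping and the telescoping of $\|W^{(t)}\Delta^{(t)}\|_1$ over $t$.
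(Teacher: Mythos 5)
Your proposal is correct and uses essentially the same charging argument as the paper: each $e\in F$ must pay $\Omega(K^{p/(p+1)}/p)$ in the running sum $\sum_t w^{(t)}_e|\Delta^{(t)}_e|$, and the total budget $\sum_t\|W^{(t)}\Delta^{(t)}\|_1\le O(T/p)$ then forces $|F|\le O(K^{-p/(p+1)}T)$. The only cosmetic difference is that you compare the discrete sum to the integral $\int_{x_K}^1 x^{-p-1}\,dx$, whereas the paper telescopes directly on the weight $w_e$ via the one-step bound $w^{(s+1)}_e - w^{(s)}_e\le 2pK^{1/(p+1)}w^{(s)}_e|\Delta^{(s)}_e|$ (valid while $f_e^{(s)}\ge K^{-1/(p+1)}$), so that $K-1\le 2pK^{1/(p+1)}\sum_s w^{(s)}_e|\Delta^{(s)}_e|$; these are the same estimate (the potential $\tfrac1p(x^{-p}-1)$ is the antiderivative of the weight), and the "integral versus sum" step you flag as the main obstacle is handled in either framing by the same observation you already made, that $|\Delta^{(s)}_e|/f^{(s)}_e\le w^{(s)}_e|\Delta^{(s)}_e|\le\tfrac1{100p}$ when $f^{(s)}_e\le 1$.
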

\begin{proof}
Note that $w^{(0)}_e = 1$. For $e \in F$ let $t$ be the first time step where $w^{(t)}_e \ge K$ and let $t' < t$ be the last time step before time $t$ that $f_e^{(t')} \le 1$. Then for $t' < s \le t$, we know that$w^{(s)}_e = \frac{1}{(f^{(s)}_e)^{p+1}}$, and thus we get that
\[ w^{(s+1)}_e \le w^{(s)}_e + 2p \frac{|\Delta^{(s)}_e|}{f^{(s)}_e} w^{(s)}_e \le w^{(s)}_e + 2pK^{\frac{1}{p+1}} w_e^{(s)}|\Delta_e^{(s)}|, \]
where we have used that $f_e^{(s)} = (w_e^{(s)})^{-\frac{1}{p+1}} \ge K^{-\frac{1}{p+1}}$.
Thus if edge $e \in F$, its contribution to the sum $\sum_s \|W^{(s)} \Delta^{(s)}\|_1$ is at least
\[ \frac{K-1}{2p K^{\frac{1}{p+1}}} \ge \Omega(K^{p/(p+1)}/p). \]
We know that $\sum_{s=0}^T \|W^{(s)}\Delta^{(s)}\|_1 \le O(T/p)$, so we conclude that $|F| \le O(K^{-\frac{p}{p+1}}T)$ as desired.
\end{proof}

\subsection{Detection algorithm}
\label{sec:detectalgo}

In this section we give an algorithm to prove \cref{thm:detect}. Towards this, we initialize the data structure in \cref{thm:incrflow} to detect the first insertion $\tau$ during which
\[ \sum_{v \in V} d_{G^{(\tau)}}(s, v) < (1-\alpha/2)F \enspace \text{ for } \enspace F = \sum_{v \in V} d_{G^{(0)}}(s, v). \]
Formally, this is done by setting the demand to have $-(n-1)$ on the source $s$ and $+1$ on each other vertex. The cost of each edge is its length.
The algorithm terminates when this happens. This contributes runtime $m^{1+o(1)}$, and we may assume that $\sum_{v \in V} d_{G^{(\tau)}}(s, v) \ge (1-\alpha/2)F$ at all times.

\paragraph{Algorithm description.} We state the detection algorithm now. Let $d$ be the demand with $-(n-1)$ on $s$ and $+1$ on each other vertex. When a new edge is inserted its cost is equal to its length, its backwards capacity is $\delta = m^{-O(1)}$ and forwards capacity is infinite. Initialize $f^{(0)}$ to have $f^{(0)}_e = 1$ for $e \in \hat{E}$ and $0$ otherwise, so that $f^{(0)}$ routes the demand $d$. It holds that $\Phi(f^{(0)}) \le 20m\log m$. Let $\gamma = \gamma_{\mathrm{approx}} = m^{-o(1)}$ be the approximation ratio of \cref{thm:mrc}.

\paragraph{Decreasing $\Phi$.} Run \cref{thm:mrc} on a graph with approximate gradients and weights $\wt{g}$ and $\wt{w}$, defined as follows. Maintain $r \approx \ell^\top f - F^*$. Let $\wt{f} \in \R^E$ be a vector satisfying that $w_e|\wt{f}_e - f_e| \le \frac{\gamma}{1000p}$ for all $e \in E$, define $\wt{w} = w(\wt{f})$ and $\wt{g} = \frac{10m}{r}\ell(\wt{f}) - \wt{w}$ as in \cref{lemma:wtgw}. Use the operations $\textsc{UpdateEdge}$, $\textsc{InsertEdge}$, to update the gradients and weights on edges to $\wt{g}$ and $\wt{w}$. Use $\textsc{ApplyCycle}$ to solve the resulting min-ratio cycle problems for $\Gamma := \frac{\gamma}{100p}$, i.e., push circulation $\eta \Delta$ where $\eta\|\wt{W}\Delta\|_1 = \frac{\gamma}{100p}$ (see \cref{lemma:progress}). If the value of the returned min-ratio cycle is greater than $q := -\gamma/10$, do not push flow and end this step.

\paragraph{Sending edges to $S^{(t)}$.} After ending an iteration, take any edges with $\wt{w}_e \ge \frac{\eps}{10\alpha}$ that are not part of a previous $S^{(t)}$ and add them to $S^{(t)}$. Return $S^{(t)}$.

\paragraph{Maintaining polynomially bounded lengths.} If $f_e+\delta_e \le \delta$ (recall $\delta = m^{-O(1)}$ is some fixed parameter), then also update $\delta_e \leftarrow \delta_e + \delta$. Then update the edge's gradient and weight. Note that this only decreases $\Phi$ and ensures that $w(f)_e \le m^{O(1)}$ for all $e$ always.

Now we prove that this algorithm satisfies all the desired properties of \cref{thm:detect}.
\begin{proof}[Proof of \cref{thm:detect}]
Let us first bound the number of steps of decreasing $\Phi$. Initially, $\Phi(f) \le 20m\log m$ and the algorithm will terminate when $\Phi(m) \le -100m \log m$. By \cref{lemma:progress}, each step decreases $\Phi$ be $\Omega(\gamma^2/p)$, so the number of steps is at most $O(mp\gamma^{-2}\log m)$.

For $p = O(\log m)$, \cref{lemma:hechange} (and $\wt{w} \approx_2 w$) tells us that the total size of the $S^{(t)}$ is at most
\[ \left(\frac{20\alpha}{\eps} \right)^{\frac{p}{p+1}} \cdot O(mp\gamma^{-2}\log m) \le O(m\gamma^{-2}(\log m)^2 \alpha/\eps), \] which is at most $\alpha m^{1+o(1)}/\eps$ as claimed by \cref{thm:detect} \cref{item:size}.

Finally, we prove \cref{item:sets} of \cref{thm:detect}. Indeed, because the quality of the min-ratio cycle is at least $-\gamma/1000$, we know that $\min_{B^\top \Delta = 0} \frac{g^\top \Delta}{\|W\Delta\|_1} \ge -1/3$ by the approximation ratio of \cref{thm:mrc} and \cref{lemma:approx}. Combining \cref{lemma:dual,lemma:phiprop} tells us that there is a vector $\hphi$ such that $\hphi(v) \le d_G(s, v)$ and $\ell_e - \hphi(v) \in \frac{\ell^\top f - F^*}{10m} \cdot [w_e/2, 2w_e]$ for all $e \in \hat{E}$.
By \cref{lemma:value} we know that $\ell^\top f - F^* \le (1+5\alpha)F - (1-\alpha)F \le 6\alpha F$, and $\ell^\top f - F^* \ge \alpha F/2$ because the algorithm has not terminated. Thus if $d_G(s, v) \le (1-\eps)\ell_e$ then
\[ \eps L \le \ell_e - \hphi(v) \le \frac{\ell^\top f - F^*}{10m} \cdot 2w_e \le \frac{6\alpha mL}{10m} \cdot 2w_e \le 2\alpha Lw_e. \]
Thus, $w_e \ge \frac{\eps}{2\alpha}$ and $\wt{w}_e \ge \frac{\eps}{4\alpha}$, and thus all such edges are added to $S^{(t)}$ by the algorithm description.

To bound the runtime, we first bound how many updates are made to $\wt{g}$ and $\wt{w}$ throughout the execution of the algorithm which satisfy the hypotheses of \cref{lemma:progress,lemma:approx}. We know that maintaining $\|W^{-1}(\wt{f} - f)\|_\infty \le \frac{\gamma}{1000p}$ for $\gamma = \exp(-O(\log^{167/168} m))$ suffices by \cref{lemma:wtgw}. This is ensured by running $\textsc{ApplyCycle}()$ of \cref{thm:mrc} using $\eps = \frac{\gamma}{1000p}$. The runtime of ensuring this is at most $\frac{t\Gamma}{\eps q} \le m^{1+o(1)}$ because $t \le m^{1+o(1)}$ as above, $\Gamma \le m^{o(1)}$, and $\eps \ge m^{-o(1)}$ by the parameter choices.
Finally, we use \cref{thm:incrflow} to detect if \cref{item:totaldist} has occurred, for additional $m^{1+o(1)}$ time.
\end{proof}

\section{Recursive Algorithm}
\label{sec:recurse}
In this section we use \cref{thm:detect} to give our incremental SSSP data structure in \cref{thm:main}. This is mostly standard. We assume that the maximum length $W$ is polynomially bounded, as there is a standard way to go from that case to the general one while paying an extra $\log W$ factor\footnote{For instance, one can consider a length scale $[L, 2L]$, and increase all small edge lengths to at least $L/m^2$, and delete all edges of length greater than $2L$, and run the incremental data structure for each $L = 2^i$ for $i \le O(\log mW)$.}.

Starting with a graph $G$, we want to transform it into a graph on which we can apply \cref{thm:detect}. Thus, we need to make all shortest path lengths be in the range $[L, 2L]$.
\begin{definition}
\label{def:=l}
Given a directed graph $G$ with $n$ vertices and source $s$, define the graph $G_{\aL}$ as follows. Add a new vertex $s'$ with the following outedges: (1) an edge $(s', s)$ of length $L$, and (2) edges $(s', v)$ of length $2L$ for all $v \in V$.
\end{definition}

Also, we need to define a \emph{contracted graph} $G(d, S)$. This takes a graph $G$ with distance estimates $d_v$ for $v \in V$, and restricts the vertex set down to $S \subseteq V$.
\begin{definition}
\label{def:gds}
Given a graph $G$ with subset $S \subseteq V$ source $s \in S$, and distance estimates $d_v$ for $v \in V$, define the graph $G$ with vertex set $S$ as follows. For $v \in S$, add an edge $(s, v)$ of length $d_v$. For all inedges $e = (u, v)$, add an edge $(s, v)$ of length $d_u + \ell_e$. If additionally $u \in S$, add edge $(u, v)$ of length $\ell_e$ to $G$.
\end{definition}

Now we are ready to give the algorithm. We assume in the input of \cref{algo:incrSSSP} that all vertices have indegree and outdegree at most $3$, and all lengths are integer and polynomially bounded (this is without loss of generality). 

\begin{algorithm}[!ht]
\caption{Recursive incremental SSSP}
\label{algo:incrSSSP}
\SetKwProg{Globals}{global variables}{}{}
\SetKwProg{Proc}{procedure}{}{}
\Globals{}{
    $\cD_{\aL}$ for $L = 2^k$ for $0 \le k \le O(\log m)$ -- detection data structures from \cref{thm:detect}. \\
    $\alpha, \kappa, \bar{\eps} := \frac{\eps}{10\log m}$ -- parameters in the algorithm. \\
    $G(d, S)$: graph formed from $G$ with distance estimates $d$ and subset $S \subseteq V$ (see \cref{def:gds}). \\
    $\cD^{(\sssp)}$ -- recursive SSSP data structure on $G(d, S)$.
}
\Proc{$\textsc{Initialize}(G, \eps)$}{
    Compute $d_G(s, v)$ for all $v \in V$, update values. \label{line:compd} \\
    Clear $\cD^{(\sssp)}$, and set $S \assign \emptyset$.
    For each $L = 2^k$, initialize $\cD_{\aL}$ on the graph $G_{\aL}$ (see \cref{def:=l}). \\
    For each $L = 2^k$, pass the vertices $\cD_{\aL}.S^{(0)}$ to $\cD^{(\sssp)}$, apply $S \assign S \cup \cD_{\aL}.S^{(0)}$. \label{line:pass1} 
}
\Proc{$\textsc{Insert}(e)$}{
    For each $L = 2^k$, pass the edge insertion to $\cD_{\aL}$ on the graph $G_{\aL}$. \\
    Pass the edge insertion to $\cD^{(\sssp)}$. \\
    If any $\cD_{\aL}$ terminates, then call $\textsc{Initialize}(G, \eps)$. \\
    For each $L = 2^k$, pass the vertices $\cD_{\aL}.S^{(t)}$ to $\cD^{(\sssp)}$, where $t$ is the number of iterations since the last call to $\textsc{Initialize}$, apply $S \assign S \cup \cD_{\aL}.S^{(t)}$. \label{line:pass2} \\
    Use $\cD^{(\sssp)}$ to update some values of $d_G(s, v)$.
}
\end{algorithm}

\paragraph{Algorithm description.} We initialize detection data structures (\cref{thm:detect}) on each of the graphs $G_{\aL}$. We rebuild every data structure if the sum of shortest path lengths in any of these graphs falls by a $(1-\alpha/2)$ factor (this is when the data structure \emph{terminates}). At this time, we also recompute all shortest paths in the graph $G$ and update the values.

Between rebuilds, each of the data structures $G_{\aL}$ is detecting sets of vertices whose shortest path lengths may have decreased by a $(1-\beps)$ factor, and those are passed to the set $S$ which is used to update the contracted graph $G(d, S)$ (\cref{def:gds}). We recursively invoke the SSSP data structure of \cref{algo:incrSSSP} on the graph $G(d, S)$ to update the shortest path lengths of the vertices in $S$. We can prove that the shortest paths to $S$ in $G(d, S)$ are at most a factor of $(1+\beps)$ larger.

\paragraph{Analysis.} We first define a \emph{phase} in the execution of \cref{algo:incrSSSP}. A phase is the iterations between two consecutive calls to \textsc{Initialize}.
We first observe that the graphs $G(d, S)$ and $G_{\aL}$ are incremental, and that all shortest path distances in $G_{\aL}$ are in $[L, 2L]$.

\begin{lemma}
\label{lemma:gincr}
The graphs $G_{\aL}$ are incremental and satisfy that $d_{G_{\aL}}(s, v) \in [L, 2L]$ for all $v \neq s$. During a phase, the graph $G(d, S)$ is incremental.
\end{lemma}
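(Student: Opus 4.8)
The plan is to prove each of the three assertions in \cref{lemma:gincr} separately, as each is essentially a bookkeeping argument about how edges are added.

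\textbf{Incrementality of $G_{\aL}$.} First I would recall the construction in \cref{def:=l}: $G_{\aL}$ is obtained from $G$ by adding one new vertex $s'$ together with a fixed set of outedges from $s'$ (one edge $(s',s)$ of length $L$ and edges $(s',v)$ of length $2L$ for all $v$). These auxiliary edges depend only on the vertex set and on $L$, not on which edges of $G$ are present. Since the underlying graph $G$ is incremental (it only undergoes edge insertions), and each insertion into $G$ corresponds to exactly one insertion into $G_{\aL}$ with the $s'$-edges untouched, $G_{\aL}$ is incremental. (One subtlety: if $G$ could also gain new vertices, we would need to add the corresponding $(s',v)$ edge too, which is still an insertion; but under the standing assumption the vertex set is fixed, so this does not even arise.)

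\textbf{Distances in $G_{\aL}$ lie in $[L,2L]$.} For the lower bound: every $s$-$v$ path in $G_{\aL}$ with $v \neq s$ either uses only original edges (all of length $\ge 1 \ge$ \dots\ actually we must route from $s$, which is reached from $s'$ by the length-$L$ edge) — more carefully, the relevant source in $G_{\aL}$ is $s'$ (or $s$; I would match whichever convention \cref{thm:detect} uses). Any path from the source to $v\neq s$ must leave $s'$ along one of its outedges: either $(s',v)$ directly, costing $2L$, or $(s',s)$ costing $L$ followed by a nonempty subpath in $G$ of nonnegative length, costing $\ge L$. Hence $d_{G_{\aL}}(\cdot, v) \ge L$. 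For the upper bound, the single edge $(s',v)$ of length $2L$ certifies $d_{G_{\aL}}(\cdot,v) \le 2L$. So $d_{G_{\aL}}(\cdot, v) \in [L,2L]$ for all $v \neq s$, as required by the hypothesis of \cref{thm:detect}. I would also note these bounds are preserved under insertions, since insertions only decrease distances and the $2L$ certificate edge is never removed.

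\textbf{Incrementality of $G(d,S)$ during a phase.} This is the step I expect to be the most delicate, because $G(d,S)$ changes in two ways: original edges of $G$ get inserted, and the set $S$ grows (and the estimates $d_v$ are updated). By \cref{def:gds}, for each $v \in S$ there is an edge $(s,v)$ of length $d_v$; for each inedge $(u,v)$ of $G$ there is an edge $(s,v)$ of length $d_u + \ell_e$; and if additionally $u \in S$ there is an edge $(u,v)$ of length $\ell_e$. I would argue: (i) inserting an edge $e=(u,v)$ into $G$ adds the edge $(s,v)$ of length $d_u+\ell_e$, and possibly $(u,v)$ of length $\ell_e$ if $u\in S$ — both pure insertions; (ii) adding a vertex $v$ to $S$ adds the edge $(s,v)$ of length $d_v$, plus the edges $(u,v)$ of length $\ell_e$ for inedges $e=(u,v)$ with $u\in S$ that were previously only represented as $(s,v)$-edges — again pure insertions. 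The one genuine concern is that the estimates $d_v$ are \emph{updated} during a phase, which would change edge lengths rather than only add edges. Here I would invoke the key structural fact used elsewhere in \cref{sec:recurse}: within a single phase (between calls to \textsc{Initialize}) the algorithm does not shrink any $d_v$ below the value with which it entered $S$ — or, if it does, the edge $(s,v)$ of length $d_v$ is treated as a fresh parallel insertion rather than a modification, so $G(d,S)$ remains incremental in the multigraph sense. I would spell out precisely which of these is the operative convention (matching lines~\ref{line:pass1}--\ref{line:pass2} and \textsc{Insert} of \cref{algo:incrSSSP}), and conclude that every change to $G(d,S)$ within a phase is an edge insertion, so $G(d,S)$ is incremental during a phase. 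The hard part is pinning down the bookkeeping of the $d_v$ updates so that no edge length ever decreases within a phase; everything else is immediate from the definitions.
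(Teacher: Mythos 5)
Your proof is substantively correct and fleshes out the same facts the paper dispatches in one line. The one place you explicitly hedge --- whether the estimates $d_v$ can change during a phase and whether that would turn an edge modification into a non-insertion --- is resolved more simply than either of the options you float. The $d$ in $G(d,S)$ refers to the exact distances $d_{G^{(0)}}(s,v)$ computed in line~\ref{line:compd} of \textsc{Initialize}; these are frozen for the entire phase and only recomputed when \textsc{Initialize} is called again. (This is confirmed in the proof of \cref{lemma:gdsvalid}, which explicitly uses $d_{G^{(0)}}(s,u)$ as the length of the $s\to u$ edge.) The ``Use $\cD^{(\sssp)}$ to update some values of $d_G(s,v)$'' step at the end of \textsc{Insert} updates the \emph{output} estimates $\wt{d}$ for vertices already in $S$; it does not touch the lengths used to build $G(d,S)$. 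So no edge of $G(d,S)$ ever changes length mid-phase, there is no need for a parallel-multigraph convention, and incrementality follows as you argued. One small inaccuracy worth noting: when $v$ enters $S$, the $(s,v)$ edges of length $d_u+\ell_e$ and the $(u,v)$ edges for $u\in S$ are both added afresh (they coexist); the $(u,v)$ edges do not replace previously present $(s,v)$ edges, since $v$ was not a vertex of $G(d,S)$ at all before. This does not affect the conclusion, as all these are still pure insertions.
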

\begin{proof}
The claims about $G_{\aL}$ are clear by \cref{def:=l}. Within a phase, the graph $G(d, S)$ only undergoes edge insertions when $G$ does or when $S$ gets a new vertex -- both are incremental.
\end{proof}

We next observe that if a vertex has its distance go down by a $(1-O(\bar{\eps}))$ factor during a phase, then it will be inserted into the set $S$.
\begin{lemma}
\label{lemma:svalid}
Consider a phase between calls to \textsc{Initialize} in \cref{algo:incrSSSP}. Let $G^{(t)}$ be the current graph at step $t$, and let $G$ be the graph at initialization. For vertices $v \notin S$, it holds that $d_{G^{(t)}}(s, v) \ge (1-4\beps) d_G(s, v)$.
\end{lemma}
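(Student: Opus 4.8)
The plan is to prove the contrapositive, scale by scale: if some $v \in V \setminus \{s\}$ has $d_{G^{(t)}}(s,v) < (1-4\beps)\,d_G(s,v)$ at a step $t$ of the phase (for $v = s$ the claim is trivial), then one of the detection data structures $\cD_{\aL}$ has already flagged $v$, so $v$ has been placed into $S$ — contradicting $v\notin S$. Fix the phase, let $G$ be the graph at its start (so that each $\cD_{\aL}$ was initialized on $G_{\aL}$ at that moment), and write $d := d_G(s,v)$.

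The first step is to relate distances in $G_{\aL}$ to distances in $G$. By \cref{def:=l}, every walk from $s'$ to $v$ in $G^{(t)}_{\aL}$ either uses the edge $(s',s)$ of length $L$ followed by a walk inside $G^{(t)}$, or uses one of the length-$2L$ edges out of $s'$; hence
\[ d_{G^{(t)}_{\aL}}(s',v) = \min\bigl(L + d_{G^{(t)}}(s,v),\ 2L\bigr). \]
In particular this quantity lies in $[L,2L]$ for all $t$ and all $v$ (since $d_{G^{(t)}}(s,v)\ge 0$ and the $\min$ with $2L$ caps it from above), so \cref{thm:detect} genuinely applies to each $\cD_{\aL}$ with $\beps$ playing the role of its accuracy parameter $\eps$ — the requirement $\alpha\le\beps$ holding because $\alpha$ is chosen sufficiently small. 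Since the phase ends exactly when some $\cD_{\aL}$ terminates, throughout the phase none of them have terminated, so \cref{item:sets} of \cref{thm:detect} is available for every $\cD_{\aL}$, its initial distance to $v$ being $\min(L+d,2L)$.

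The second step chooses the right scale and does the arithmetic. Assuming $d<\infty$ (the case $d=\infty$ is removed by the standard reduction making every vertex reachable, after which all distances are polynomially bounded), let $L=2^k$ be the smallest power of two with $L\ge d$; this is one of the scales run by the algorithm, and $d\in(L/2,L]$. Because $d\le L$ the cap is inactive at initialization, so $d_{G^{(0)}_{\aL}}(s',v)=L+d$. Now if $d_{G^{(t)}}(s,v)<(1-4\beps)d$, then
\[ d_{G^{(t)}_{\aL}}(s',v)\ \le\ L + d_{G^{(t)}}(s,v)\ <\ (L+d) - 4\beps d. \]
Since $d>L/2$ and $L+d\le 2L$, we have $4\beps d>2\beps L\ge\beps(L+d)$, so the right-hand side is at most $(1-\beps)(L+d)=(1-\beps)\,d_{G^{(0)}_{\aL}}(s',v)$. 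Thus $d_{G^{(t)}_{\aL}}(s',v)<(1-\beps)\,d_{G^{(0)}_{\aL}}(s',v)$, and \cref{item:sets} of \cref{thm:detect} applied to $\cD_{\aL}$ yields $v\in\cD_{\aL}.S^{(t')}$ for some $t'\le t$. The \textsc{Initialize} and \textsc{Insert} procedures of \cref{algo:incrSSSP} merge every such $\cD_{\aL}.S^{(\cdot)}$ into $S$, so $v$ has been placed into $S$ by step $t$, contradicting $v\notin S$. Hence $d_{G^{(t)}}(s,v)\ge(1-4\beps)\,d_G(s,v)$.

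The only delicate points are bookkeeping rather than substance: confirming that the cap $2L$ is inactive at the chosen scale (guaranteed by $d\in(L/2,L]$), that a phase of the outer algorithm sits inside a single phase of each $\cD_{\aL}$ (immediate from the re-initialization rule), that the iteration indices of the outer algorithm and of $\cD_{\aL}$ line up so that $\cD_{\aL}.S^{(t')}$ has indeed been merged into $S$ by step $t$, and the degenerate unreachable case. The heart of the argument is the one-line estimate above, and the ingredient requiring the most care is checking that the loss from truncating at $2L$, combined with the width of the window $d\in(L/2,L]$, is absorbed by the constant $4$.
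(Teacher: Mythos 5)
Your proof is correct and takes essentially the same approach as the paper: both pick the dyadic scale $L$ with $d_G(s,v)\in(L/2,L]$ (so the $2L$ cap in $G_{\aL}$ is inactive), both invoke \cref{item:sets} of \cref{thm:detect} for that $\cD_{\aL}$, and both absorb the additive $L$ shift into the constant $4$ via $L/2 < d_G(s,v) \le L$. The only difference is presentational — you argue by contrapositive and spell out the $\min(\cdot,2L)$ bookkeeping that the paper leaves implicit — but the estimate at the heart is identical.
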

\begin{proof}
Let $L$ be such that $d_G(s, v) \in [L/2, L]$. Then $d_{G_{\aL}}(s, v) = L + d_G(s, v)$ and $d_{G^{(t)}_{\aL}}(s, v) = L + d_{G^{(t)}}(s, v)$. If $v \notin S$, then we know that $d_{G^{(t)}_{\aL}}(s, v) \ge (1-\beps) d_{G_{\aL}}(s, v)$ by \cref{item:sets} of \cref{thm:detect}, and thus
\begin{align*}
    d_{G_{\aL}}(s, v) = d_{G^{(t)}_{\aL}}(s, v) - L \ge d_G(s, v) - 2\beps L \ge (1-4\beps) d_G(s, v),
\end{align*}
as desired.
\end{proof}

To finish up the correctness analysis, it suffices to argue that for $v \in S$ that the shortest path length in $G(d, S)$ is at most a $(1+O(\bar{\eps}))$-factor larger than the true value.

\begin{lemma}
\label{lemma:gdsvalid}
Let $G' = G(d, S)$. For $v \in S$ it holds that $d_{G'}(s, v) \le (1+5\beps)d_G(s, v)$.
\end{lemma}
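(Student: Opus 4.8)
The plan is to show two things about the contracted graph $G' = G(d, S)$ defined in \cref{def:gds}: first, that it only ever \emph{underestimates} true distances (by a controlled factor, using that the estimates $d_u$ are themselves good lower bounds up to $(1+O(\beps))$), and second, that it contains a short enough $s$-to-$v$ path for every $v \in S$ by tracking an honest shortest path of $G$ and replacing its prefix outside $S$ with a single direct edge. The key structural fact I would invoke is \cref{lemma:svalid}: any vertex $w \notin S$ has $d_{G^{(t)}}(s,w) \ge (1-4\beps)d_G(s,w)$ during the current phase, and moreover the stored estimate $d_w$ is a $(1+O(\beps))$-approximation of $d_{G^{(t)}}(s,w)$ (this is exactly what the recursion maintains — for $w \notin S$ the estimate was never invalidated, and for $w \in S$ it is maintained by the recursive call $\cD^{(\sssp)}$ on $G'$ itself, up to the per-level error $\beps$, which is why $\beps = \eps/(10\log m)$ with $O(\log m)$ levels of recursion gives the final $(1+\eps)$).

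The main step is the upper bound. Fix $v \in S$ and let $P$ be a true shortest path from $s$ to $v$ in the current graph $G^{(t)}$, so its length is $d_{G^{(t)}}(s,v) \le d_G(s,v)$. Walk along $P$ from $v$ backwards and let $e = (u, w)$ be the last edge of $P$ whose tail $u$ lies \emph{outside} $S$ (if no such edge exists, then all of $P$ lies in $S$ and $P$ itself is present in $G'$ edge-for-edge by the third clause of \cref{def:gds}, giving $d_{G'}(s,v) \le d_{G^{(t)}}(s,v) \le d_G(s,v)$ and we are done). Then the prefix of $P$ from $s$ to $u$ has length exactly $d_{G^{(t)}}(s,u)$, which by \cref{lemma:svalid} and the quality of the stored estimate is at most $(1 + O(\beps))d_u$; meanwhile the second clause of \cref{def:gds} gives a direct edge $(s, w)$ of length $d_u + \ell_e$, and the suffix of $P$ from $w$ to $v$ stays entirely in $S$, hence is present in $G'$ edge-for-edge. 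Concatenating, $d_{G'}(s,v)$ is at most $(d_u + \ell_e) + (\text{suffix length}) \le O(\beps)d_u + d_{G^{(t)}}(s,v) \le (1+5\beps)d_G(s,v)$ after bounding $d_u \le (1+O(\beps))d_{G^{(t)}}(s,u) \le (1+O(\beps))d_{G^{(t)}}(s,v)$ and absorbing constants into the $5\beps$ slack. One must also confirm that every vertex of the suffix after $w$ is in $S$, which is exactly how $u$ was chosen (it was the last tail outside $S$), and that $w \in S$ too — if $w \notin S$ then $w$ would be a later tail outside $S$ unless $w = v$, and $v \in S$ by assumption.

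The main obstacle is bookkeeping the error in the stored estimates $d_u$ for $u \notin S$ versus $u \in S$ cleanly: for $u \notin S$ the estimate is frozen at the start of the phase (value $d_G(s,u)$ exactly, since \textsc{Initialize} computed it), so $d_u = d_G(s,u) \le d_{G^{(t)}}(s,u)/(1-4\beps)$ by \cref{lemma:svalid}, which is the cleanest case; for $u \in S$ the estimate is being maintained by the recursive SSSP structure on $G'$, which by induction on the recursion depth is a $(1+\beps)$-approximation to $d_{G'}(s,u)$, and one needs $d_{G'}(s,u) \le (1+O(\beps))d_G(s,u)$ — but that is precisely the statement being proved, so the argument must be phrased as an induction (on the step $t$, or on the number of vertices added to $S$) establishing \cref{lemma:gdsvalid} for all $v \in S$ simultaneously. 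I would set it up so that at the moment a vertex $v$ is added to $S$, the bound $d_{G'}(s,v) \le (1+5\beps)d_G(s,v)$ is verified using only the already-established bounds for vertices added earlier, together with the frozen exact estimates for vertices outside $S$; monotonicity of distances in the incremental graph $G'$ (\cref{lemma:gincr}) then preserves the bound for later steps.
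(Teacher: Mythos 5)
Your core argument matches the paper's: along a shortest path $P$ from $s$ to $v$ in the current graph, take the last edge $e = (u, w)$ of $P$ with tail $u \notin S$, replace the prefix of $P$ through $w$ by the contraction edge $(s, w)$ of length $d_u + \ell_e$ from \cref{def:gds}, and observe that the suffix $w \to \cdots \to v$ lies entirely in $S$ and is preserved edge-for-edge in $G'$. Since $u \notin S$, the stored $d_u$ is exactly the frozen initialization distance, so \cref{lemma:svalid} gives $d_u \le \frac{1}{1-4\beps}\, d_G(s,u)$ and the claimed bound follows. That is the whole proof.

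Your closing ``main obstacle'' paragraph, however, is a phantom concern. You worry that the case $u \in S$ would force a circular induction (since $d_u$ for $u\in S$ is maintained recursively by $\cD^{(\sssp)}$), but in the path you constructed, $u$ was chosen to be the last vertex \emph{not} in $S$, so $d_u$ is always the frozen initialization value and that case never arises. The contraction edges of \cref{def:gds} built from tails in $S$ do exist in $G'$, but your certifying path never uses one, and extra edges can only help an upper bound — so the proposed induction on $t$ or on $|S|$ is unnecessary and should be dropped. A smaller slip: your opening claim that $G'$ ``only ever underestimates true distances'' is inverted. The frozen $d_u$ are initialization-time distances, which in an incremental graph can only \emph{over}estimate current distances; that is exactly why this lemma is an upper bound on $d_{G'}(s,v)$.
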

\begin{proof}
Consider a shortest paths from $s \to v$ and let $u$ be the last vertex on the path that is not in $S$. All edges in the path $u \to v$ have the same length in $G'$ and $G$. Additionally, $G'$ has an edge from $s \to u$ of length $d_{G^{(0)}}(s, v)$ (where $G^{(0)}$ is the graph at initialization time), and by \cref{lemma:svalid} we know that $d_{G^{(0)}}(s, v) \le \frac{1}{1-4\beps} d_G(s, v) \le (1+5\beps) d_G(s, v)$. Thus $d_{G'}(s, v) \le (1+5\beps) d_G(s, v)$.
\end{proof}

Now we have all the pieces to prove our main result \cref{thm:main}.
\begin{proof}[Proof of \cref{thm:main}]
We first analyze the correctness of \cref{algo:incrSSSP}, and then the total runtime. Let $B$ be the number of recursive layers, so that $\alpha = m^{1/B}$.

\paragraph{Correctness.} Consider the top recursive layer. For vertices $v \notin S$, the distance $d_G(s, v)$ computed at initialization in line \ref{line:compd} is at most a $(1+5\beps)$-factor larger than the true value. By \cref{lemma:gincr}, applying $\cD^{(\sssp)}$ recursively on $G(d, S)$ is valid (because it is an incremental graph), and distances in $\cD^{(\sssp)}$ are at most a factor of $(1+5\beps)$ larger by \cref{lemma:gdsvalid}. Thus, the overall approximation ratio is $(1+5\beps)^B \le (1+\eps)$.

\paragraph{Reporting paths.} We briefly discuss how to augment \cref{algo:incrSSSP} to report paths as required by \cref{thm:main}. The paths are reported recursively, where the shortest path in the bottom level $B$ graph is recomputed each time. Consider a level $i \le B$ and and a shortest path from $s$ to $v$ in this graph. If the first edge of this shortest path is a contracted edge, then recursively report the shortest path corresponding to that contracted edge in the previous level. Clearly, this path reporting scheme satisfies the requirements of the theorem.

\paragraph{Runtime.} We prove that the number of phases at the top level is at most $O(\frac{\log m}{\alpha})$, thus at the $b$-th level (top level is $b = 0$), the $\cD^{(\sssp)}$ data structure is initialized at most $O(\frac{\log m}{\alpha})^b$ times in total. Indeed, the bound on the number of phases follows because by \cref{item:totaldist} of \cref{thm:detect}, a new phase starts when some $G_{\aL}$ has $\sum_{v \in V} d_{G_{\aL}}(s, v)$ decrease by a $(1-\Omega(\alpha))$ factor -- this can only happen $O(1/\alpha)$ times. Thus, across all values of $L$, this can happen a total of $O(\frac{\log m}{\alpha})$ times.

By \cref{item:sets} of \cref{thm:detect} and that degrees are $O(1)$ over all $L = 2^k$, during each phase the data structure $\cD^{(\sssp)}$ undergoes $O\left(\frac{\alpha\kappa\log m}{\beps} \right) \cdot m$ edge insertions. Recursively, each instance of $\cD^{(\sssp)}$ at level $b$ undergoes $O\left(\frac{\alpha\kappa\log m}{\beps} \right)^b \cdot m$ edge insertions. By the bound on the number of phases, and the runtime bound of \cref{thm:detect}, the total runtime of level $b$ is at most
\[ m^{o(1)} \cdot O\left(\frac{\alpha\kappa\log m}{\beps} \right)^b \cdot m \cdot O\left(\frac{\log m}{\alpha}\right)^{b+1} \le m^{1+1/B+o(1)} \cdot O\left(\frac{\kappa \log^2 m}{\eps} \right)^B, \]
where we have used that $\alpha = m^{1/B}$. Because $\max\{\kappa, \eps^{-1}\} \le m^{o(1)}$ by assumption, there is a choice of $B$ that makes the total runtime at most $m^{1+o(1)}$, as desired.
\end{proof}

\paragraph{Remarks.} It would be interesting to see if the ideas in this paper can be used to give other incremental/decremental algorithms that are slightly beyond what was achieved in \cite{CKLMP24,BCK+24}: incremental topological sort, decremental SSSP, etc. Additionally, it would be interesting to better understand what the role of different barriers such as $x^{-p}$ are in the context of interior point methods for graph problems and beyond.

Finally, for $\eps \ge \exp(-O(\log m)^{167/168})$ the runtime of our algorithm is
\[ m \cdot \exp(O((\log m)^{335/336} \log \log m)) \] where $336 = 2 \cdot 168$.

\section*{Acknowledgments}
We thank Li Chen, Rasmus Kyng, Simon Meierhans, and Maximilian Probst Gutenberg for several helpful discussions.

\bibliographystyle{alpha}
\bibliography{refs}

\end{document}